\DeclareMathOperator{\proj}{\ensuremath{e}}
\DeclareMathOperator{\argmin}{argmin}
\DeclareMathOperator{\wops}{{\bf W}}
\DeclareMathOperator{\wpol}{wPol}
\newcommand{\inst}{\ensuremath{I}}
\DeclareMathOperator{\rel}{{\bf R}}
\newcommand{\tup}[1]{\ensuremath{\mathbf #1}}
\newcommand{\tuple}[1]{\ensuremath{(#1)}}
\newcommand{\ignore}[1]{}
\DeclareMathOperator{\wrel}{{\bf \Phi}}
\DeclareMathOperator{\supp}{{\rm supp}}
\DeclareMathOperator{\VCSP}{VCSP}
\DeclareMathOperator{\feas}{Feas}
\DeclareMathOperator{\perm}{Cycl}
\DeclareMathOperator{\dom}{{Feas}}
\DeclareMathOperator{\wclone}{wClone}
\DeclareMathOperator{\imp}{Imp}
\DeclareMathOperator{\inv}{Inv}
\DeclareMathOperator{\pol}{Pol}
\DeclareMathOperator{\wrelclone}{wRelClone}
\DeclareMathOperator{\relclone}{RelClone}
\newcommand{\Q}{\mbox{$\mathbb Q$}}
\newcommand{\qq}{\ensuremath{\overline{\mathbb{Q}}}}
\date{}
\newtheorem{theorem}{Theorem}
\newtheorem{lemma}{Lemma} 
\newtheorem*{lemma*}{Lemma} 
\newtheorem*{proposition*}{Proposition} 
\newtheorem*{theorem*}{Theorem} 
\newtheorem{proposition}[theorem]{Proposition}
\newtheorem{corollary}{Corollary}
\newtheorem{definition}{Definition}
\theoremstyle{remark}
\newtheorem{example}{Example}
\begin{document}
\title{Necessary Conditions for Tractability of Valued CSPs\thanks{Stanislav \v{Z}ivn\'y was supported by a Royal Society University Research Fellowship.}}

\author{
Johan Thapper\\
Universit\'e Paris-Est, Marne-la-Vall\'ee, France\\
\texttt{thapper@u-pem.fr}
\and
Stanislav \v{Z}ivn\'{y}\\
University of Oxford, UK\\
\texttt{standa.zivny@cs.ox.ac.uk}
}

\maketitle

\begin{abstract}

The connection between constraint languages and clone theory has been a fruitful
line of research on the complexity of constraint satisfaction problems. In a
recent result, Cohen et al.~[SICOMP'13] have characterised a Galois connection
between valued constraint languages and so-called weighted clones. In this
paper, we study the structure of weighted clones. We extend the results of Creed
and \v{Z}ivn\'y from~[CP'11/SICOMP'13] on types of weightings necessarily
contained in every nontrivial weighted clone. This result has immediate
computational complexity consequences as it provides necessary conditions for
tractability of weighted clones and thus valued constraint languages. 
We demonstrate that some of the necessary conditions are
also sufficient for tractability, while others are provably not.

\end{abstract}

\section{Introduction}\label{sec:intro}

The constraint satisfaction problem (CSP) is a general framework capturing
decision problems arising in many contexts of computer
science~\cite{Dechter03:processing,Apt03:principles,Hell08:survey}. The CSP is NP-hard in general but there has been
much success in finding tractable fragments of the CSP by restricting the types
of relation allowed in the constraints.  A set of allowed relations has been
called a \emph{constraint language}~\cite{Feder98:monotone,Jeavons98:algebraic}.
For some constraint languages the associated constraint satisfaction problems
with constraints chosen from that language are solvable in polynomial-time,
whilst for other constraint languages this class of problems is
NP-hard~\cite{Jeavons97:closure,Feder98:monotone}; these are referred to as
\emph{tractable languages} and \emph{NP-hard languages}, respectively. Dichotomy
theorems, which classify each possible constraint language as either tractable
or NP-hard, have been established for constraint languages over two-element
domains~\cite{Schaefer78:complexity}, three-element
domains~\cite{Bulatov06:3-elementJACM}, for conservative (containing all unary
relations) constraint
languages~\cite{Bulatov11:conservative}, for maximal constraint
languages~\cite{Bulatov01:complexity,Bulatov04:graph}, for graphs (corresponding
to languages containing a single binary symmetric relation)~\cite{Hell90:h-coloring}, and for
digraphs without sources and sinks (corresponding to languages containing a
single binary relations without sources and sinks)~\cite{Barto09:siam}. The most successful
approach to classifying the complexity of constraint languages has been the
algebraic approach~\cite{Jeavons97:closure,Bulatov05:classifying,Barto14:jacm}.

The \emph{valued} constraint satisfaction problem (VCSP) is a generalisation of
the CSP that captures not only decision problems but also optimisation
problems~\cite{Cohen06:complexitysoft,z12:complexity,jkz14:beatcs}. A VCSP
instance associates with each constraint a \emph{weighted relation}, which is a
$\qq$-valued function, where $\qq=\mathbb{Q}\cup\{\infty\}$ is the set of
extended rational numbers, and the goal is to minimise the sum of the weighted
relations associated with all constraints. Tractable fragments of the VCSP have
been identified by restricting the types of allowed weighted relations that can
be used to define the valued constraints. A set of allowed weighted relations
has been called a \emph{valued constraint
language}~\cite{Cohen06:complexitysoft}. Classifying the complexity of
\emph{all} valued constraint languages is a challenging task as it includes as a
special case the classification of $\{0,\infty\}$-valued languages (i.e.,
constraint languages), which would answer the conjecture of Feder and
Vardi~\cite{Feder98:monotone}, which asserts that every constraint language is
either tractable or NP-hard, and its algebraic refinement, which specifies the
precise boundary between tractable and NP-hard
languages~\cite{Bulatov05:classifying}. However, several nontrivial results are
known. Dichotomy theorems, which classify each possible valued constraint
language as either tractable or NP-hard, have been established for valued
constraint languages over two-element domains~\cite{Cohen06:complexitysoft}, for
conservative (containing all $\{0,1\}$-valued unary cost functions) valued
constraint languages~\cite{kz13:jacm}, and for minimum-solution
languages (containing relations and a single unary injective weighted relation)~\cite{tz15:icalp}. Furthermore, it has been shown that a dichotomy for
constraint languages implies a dichotomy for valued constraint
languages~\cite{Kolmogorov15:general-valued}. Moreover, the power of the basic
linear programming relaxation~\cite{tz12:focs,kolmogorov13:icalp,ktz15:sicomp}
and the power of the Sherali-Adams relaxations~\cite{tz15:icalp} for valued
constraint languages have been characterised.

Cohen et al. have recently introduced an algebraic theory of weighted
clones~\cite{cccjz13:sicomp} for classifying the computational complexity of
valued constraint languages. This
theory establishes a one-to-one correspondence between valued constraint
languages closed under expressibility (which does not change the complexity of
the associated class of optimisation problems), called weighted relational
clones, and weighted clones~\cite{cccjz13:sicomp}.
This is an extension of (part of) the algebraic approach to CSPs which relies on a
one-to-one correspondence between constraint languages closed under
pp-definability (which does not change the
complexity of the associated class of decision problems), called
relational clones, and clones~\cite{Bulatov05:classifying}, thus making it possible
to use deep results from universal algebra.

Creed and \v{Z}ivn\'y initiated the study of weighted clones and have used the theory of weighted clones to determine
certain necessary conditions on nontrivial weighted clones and thus on
tractable valued constraint languages~\cite{cz11:cp-mwc}, see
also~\cite{cccjz13:sicomp}. In particular,~\cite{cz11:cp-mwc} simplifies the
NP-hardness part of the complexity classification of Boolean valued constraint
languages from~\cite{Cohen06:complexitysoft}. 

\paragraph{Contributions}
We continue the study of weighted clones started
in~\cite{cz11:cp-mwc,cccjz13:sicomp}. After introducing valued constraint
satisfaction problems and all necessary tools in Section~\ref{sec:prelim}, we
study, in Section~\ref{sec:wclones}, structural properties of nontrivial
weighted clones.  
Our main result on weighted clones, Theorem~\ref{thm:class2}, is an extension of
a result from~\cite{cccjz13:sicomp} that provides a more fine-grained
characterisation of what conditions on weighted clones are necessary for
tractability. Moreover, we demonstrate that some of the necessary conditions are
also sufficient for tractability, while others are provably not. Overall, we
give a structural result that shows what types of weightings are guaranteed to
exist in nontrivial weighted clones. As a direct consequence, we narrow down the
possible structure of tractable weighted clones.
A proof of our main result is presented in Section~\ref{sec:proof} and is based
on an application of Gordan's theorem, which is a variant of LP duality. The
introduced technique is novel and might prove useful in future work on weighted
clones. 
Finally, we relate our results to \emph{maximal} tractable valued constraint
languages, or equivalently, to \emph{minimal} tractable weighted clones.
\paragraph{Related work}
Given the generality of the VCSP, there have been results on the complexity of
special types of VCSPs. Finite-valued CSPs are VCSPs in which all weighted
relations are $\mathbb{Q}$-valued. In other words, finite-valued CSPs are purely
optimisation problems and thus do not include as a special case (decision) CSPs.
The authors have
recently classified all finite-valued constraint languages on arbitrary finite
domains~\cite{tz13:stoc}. Minimum Solution (Min-Sol) problems are Valued CSPs
with one unary \emph{injective} $\mathbb{Q}$-valued weighted relation and
$\{0,\infty\}$-valued weighted relations. Min-Sols generalise
Min-Ones~\cite{Creignouetal:siam01} and bounded integer linear programs.
Min-Sols have been only very recently classified~\cite{tz15:icalp} with respect
to computational complexity, thus improving on previous partial
classifications~\cite{Khanna01:approximability,Uppman13:icalp,Jonsson07:maxsol,Jonsson08:max-sol,Jonsson08:siam}.
Minimum Cost Homomorphism (Min-Cost-Hom) problems are Valued CSPs in which all
but unary weighted relations are $\{0,\infty\}$-valued. Thus the optimisation
part of the problem is only given by a sum of unary terms. This may seem very
restrictive but it is known~\cite{ccjz15:aaai,Powell15:arxiv} 
that any VCSP is equivalent to a VCSP where only the
(not necessarily injective) unary constraints involve optimisation. Min-Cost-Hom
problems with all unary cost functions have been classified
in~\cite{Takhanov10:stacs}. Also, Min-Cost-Hom problems with all unary
$\{0,\infty\}$-valued cost functions~\cite{Takhanov10:cocoon,Uppman13:icalp} and
on three-element domains~\cite{Uppman14:stacs} have been classified.

\section{Preliminaries}
\label{sec:prelim}

\subsection{Valued CSPs}

Throughout the paper, let $D$ be a fixed finite set of size at least two.

\begin{definition}\label{def:rel}
An $m$-ary \emph{relation} over $D$ is any mapping $\phi:D^m\to\{c,\infty\}$ for
some $c\in\mathbb{Q}$.
We denote by $\rel_D^{(m)}$ the set of all $m$-ary relations and let
$\rel_D=\bigcup_{m\geq 1}{\rel_D^{(m)}}$.
\end{definition}

An $m$-ary relation over $D$ is commonly defined as a subset of $D^m$. Note that
Definition~\ref{def:rel} is equivalent to the standard definition as any subset
of $D^m$ can be associated with the set $\{\tup{x}\in
D^m\:|\:\phi(\tup{x})<\infty\}$. Consequently, we shall use both definitions
interchangeably. 

Given an $m$-tuple $\tup{x}\in D^m$, we denote its $i$th entry by $\tup{x}[i]$ for $1\leq i\leq m$.

Let $\qq=\mathbb{Q}\cup\{\infty\}$ denote the set of rational numbers with (positive) infinity.

\begin{definition}\label{def:wrel}
An $m$-ary \emph{weighted relation}\footnote{In some paper weighted relations are called cost functions.} over $D$ is any mapping $\phi:D^m\to\qq$. 
We denote by $\wrel_D^{(m)}$ the set of all $m$-ary weighted relations and let
$\wrel_D=\bigcup_{m\ge 1}{\wrel_D^{(m)}}$. 
\end{definition}

From Definition~\ref{def:wrel} we have that relations are a special type of
weighted relations. 
If needed we call a weighted relation \emph{unweighted} to emphasise the fact that $\phi$ is a relation.

\begin{example}
An important example of a (weighted) relation is the binary equality $\phi_=$ on $D$:
$\phi_=(x,y)=0$ if $x=y$ and $\phi_=(x,y)=\infty$ if $x\neq y$.
\end{example}

For any $m$-ary weighted relation $\phi\in\wrel_D^{(m)}$, we denote by
$\feas(\phi)=\{\tup{x}\in D^m\:|\:\phi(\tup{x})<\infty\}\in\rel_D^{(m)}$ the
underlying \emph{feasibility relation}. 

A weighted relation $\phi:D^m\to\qq$ is called \emph{finite-valued} if
$\feas(\phi)=D^m$.

\begin{definition}
Let $V=\{x_1,\ldots, x_n\}$ be a set of variables. A \emph{valued constraint} over $V$ is an expression
of the form $\phi(\tup{x})$ where $\phi\in \wrel_D^{(m)}$ and $\tup{x}\in V^m$. The number $m$ is called the \emph{arity} of the constraint,
the weighted relation $\phi$ is called the \emph{constraint weighted relation},
and the tuple $\tup{x}$ the \emph{scope} of the constraint.
\end{definition}

We call $D$ the \emph{domain}, the elements of $D$ \emph{labels} (for variables), and say that the
weighted relation in $\wrel_D$ take \emph{values}.

\begin{definition}
An instance of the \emph{valued constraint satisfaction problem} (VCSP) is specified
by a finite set $V=\{x_1,\ldots,x_n\}$ of variables, a finite set $D$ of labels,
and an \emph{objective function} $\inst$
expressed as follows:
\begin{equation}
\inst(x_1,\ldots, x_n)=\sum_{i=1}^q{\phi_i(\tup{x}_i)}\,,
\label{eq:sepfun}
\end{equation}
where each $\phi_i(\tup{x}_i)$, $1\le i\le q$, is a valued constraint over $V$.
Each constraint can appear multiple times in $\inst$.

The goal is to find an \emph{assignment} (or a \emph{labelling}) of labels to the variables that minimises $\inst$.
\end{definition}

CSPs are a special case of VCSPs with (unweighted) relations with the goal to
determine the existence of a feasible solution.
\begin{example} \label{ex:maxcut} The \textsc{Max-Cut} problem for a graph is to
find a cut with the largest possible size. This problem is
NP-hard~\cite{Garey79:intractability}
and equivalent to the \textsc{Min-UnCut} problem
with respect to exact solvability.
For a graph $(V,E)$ with $V=\{x_1,\ldots,x_n\}$, this problem can be expressed
as the VCSP instance $\inst(x_1,\ldots,x_n)=\sum_{(i,j)\in E} \phi_{\sf
xor}(x_i,x_j)$ over the Boolean domain $D=\{0,1\}$, where $\phi_{\sf
xor}:\{0,1\}^2\to\qq$ is defined by $\phi_{\sf
xor}(x,y)=1$ if $x=y$ and $\phi_{\sf xor}(x,y)=0$ if $x\neq y$.
\end{example}

\begin{definition}
Any set $\Gamma\subseteq\wrel_D$ is called a \emph{valued constraint
language}\footnote{A valued constraint language $\Gamma$ is sometimes called
\emph{general-valued} to emphasise the fact that weighted relations from
$\Gamma$ are not necessarily finite-valued.} over $D$, or simply a \emph{language}. We
will denote by $\VCSP(\Gamma)$ the class of all VCSP instances in which the
constraint weighted relations are all contained in $\Gamma$.
\end{definition}

\begin{definition}
A valued constraint language $\Gamma$ is called \emph{tractable} if
$\VCSP(\Gamma')$ can be solved (to optimality) in
polynomial time for every finite subset $\Gamma'\subseteq\Gamma$, and $\Gamma$
is called \emph{intractable} if $\VCSP(\Gamma')$ is NP-hard for some finite
$\Gamma'\subseteq\Gamma$.
\end{definition}

A valued constraint language is called \emph{finite-valued} if every weighted
relation $\phi$ from the language is finite-valued. 
Example~\ref{ex:maxcut} shows that the finite-valued constraint language
$\{\phi_{\sf xor}\}$ is intractable. 

We denote by $\feas(\Gamma)=\{\dom(\phi)\:|\:\phi\in\Gamma\}$ the set of
underlying relations of all weighted relations from $\Gamma$.

\subsection{Weighted relational clones}
\begin{definition} \label{def:expres}
We say that an $m$-ary weighted relation $\phi$ is \emph{expressible} over a valued constraint
language $\Gamma$ if there exists a VCSP instance $\inst \in \VCSP(\Gamma)$ 
with variables $V=\{x_1,\ldots,x_n,y_1,\ldots,y_m\}$,
such that
\begin{equation}
\phi(y_1,\ldots,y_m) = \min_{x_1\in D,\ldots,x_n\in D}\inst(x_1,\ldots,x_n,y_1,\ldots,y_m)\,.
\end{equation}
\end{definition}
A valued constraint language $\Gamma$ is \emph{closed under expressibility} if
every weighted relation $\phi$ expressible over $\Gamma$ belongs to $\Gamma$.
\begin{definition}
\label{defn:wrelclone}
A valued constraint language $\Gamma \subseteq \wrel_D$ is called a \emph{weighted relational clone} 
if it contains the binary equality relation $\phi_=$ on $D$ and is closed under
expressibility, scaling by non-negative rational constants (where we define
$0\cdot\infty=\infty$),
and addition of rational constants.

For any $\Gamma$, we define $\wrelclone(\Gamma)$ to be the smallest weighted relational clone containing $\Gamma$.
\end{definition}

Note that for any weighted relational clone $\Gamma$ if $\phi\in\Gamma$ then
$\feas(\phi)\in\Gamma$ as $\feas(\phi)=0\phi$. 

\begin{definition}
A \emph{relational clone} is a weighted
relational clone containing only (unweighted) relations.\footnote{Equivalently,
a set of relations containing the binary equality relation and closed under
conjunction and existential quantification.} For a set of relations $\Gamma$, we
denote by $\relclone(\Gamma)$ the smallest relational clone containing $\Gamma$.
\end{definition}

It has been shown that $\Gamma$ is tractable if and only if $\wrelclone(\Gamma)$
is tractable~\cite{cccjz13:sicomp}. Consequently, when trying to identify
tractable valued constraint languages, it is sufficient to consider only
weighted relational clones. 

\subsection{Weighted clones}

Any mapping $f:D^k\rightarrow D$ is called a $k$-ary \emph{operation}. We will
apply a $k$-ary operation $f$ to $k$ $m$-tuples $\tup{x_1},\ldots,\tup{x_k}\in
D^m$ coordinatewise, that is, 
\begin{equation}
f(\tup{x_1},\ldots,\tup{x_k})=(f(\tup{x_1}[1],\ldots,\tup{x_k}[1]),\ldots,f(\tup{x_1}[m],\ldots,\tup{x_k}[m]))\,.
\end{equation}
\begin{definition} \label{def:pol}
Let $\phi$ be an $m$-ary weighted relation on $D$ and let $f$ be a $k$-ary operation
on $D$.
Then $f$ is a \emph{polymorphism} of $\phi$ if,
for any $\tup{x_1},\tup{x_2},\ldots,\tup{x_k} \in D^m$ with
$\tup{x_i}\in\dom(\phi)$ for all $1\leq i\leq k$,
we have that $f(\tup{x_1},\tup{x_2},\ldots,\tup{x_k})\in\dom(\phi)$.

For any valued constraint language $\Gamma$ over a set $D$,
we denote by $\pol(\Gamma)$ the set of all operations on $D$ which are polymorphisms of all 
$\phi \in \Gamma$. We write $\pol(\phi)$ for $\pol(\{\phi\})$.
\end{definition}
A $k$-ary \emph{projection} is an operation of the form
$\proj^{(k)}_i(x_1,\ldots,x_k)=x_i$ for some $1\leq i\leq k$.
Projections are (trivial) polymorphisms of all valued constraint languages.

\begin{definition}
The \emph{superposition} of a $k$-ary operation $f:D^k\rightarrow D$ with $k$
$\ell$-ary operations $g_i:D^\ell\rightarrow D$ for $1\leq i\leq k$ is the
$\ell$-ary function $f[g_1,\ldots,g_k]:D^\ell\to D$ defined by
\begin{equation}
f[g_1,\ldots,g_k](x_1,\ldots,x_\ell)=f(g_1(x_1,\ldots,x_\ell),\ldots,g_k(x_1,\ldots,x_\ell))\,.
\end{equation}
\end{definition}

\begin{definition}\label{def:clone}
A \emph{clone} of operations, $C$, is a set of operations on $D$
that contains all projections and is closed under superposition.
The $k$-ary operations in a clone $C$ will be denoted by $C^{(k)}$.
\end{definition}

\begin{example}\label{ex:clone}
For any $D$, let $\mathbf{J}_D$ be the set of all projections on $D$ and
$\mathcal{O}_D$ be the set of all operations on $D$. By
Definition~\ref{def:clone}, both $\mathbf{J}_D$ and $\mathcal{O}_D$ are clones.
\end{example}

It is well known that $\pol(\Gamma)$ is a clone for all valued constraint
languages $\Gamma$~\cite{Denecke-Wismath02}.

\begin{definition} \label{defn:wop}
A $k$-ary \emph{weighting}
of a clone $C$ is a function $\omega : C^{(k)} \rightarrow \mathbb{Q}$ such that
$\omega(f) < 0$ only if $f$ is a projection and
\begin{equation}\label{ineq:weightingzero}
  \sum_{f \in C^{(k)}}\omega(f)\ =\ 0\,.
\end{equation}
We define $\supp(\omega)=\{f\in C^{(k)}\:|\:\omega(f)>0\}$.
\end{definition}
\begin{definition} \label{defn:wp_trans}
For any clone $C$, any $k$-ary weighting $\omega$ of $C$, and any $g_1,\ldots,g_k \in
C^{(\ell)}$, the \emph{superposition}
of $\omega$
and $g_1,\ldots,g_k$, is the function $\omega[g_1,\ldots,g_k]: C^{(\ell)}
\rightarrow \mathbb{Q}$ defined by
\begin{equation}
\omega[g_1,\ldots,g_k](f') = \sum_{\{f \in C^{(k)}\:|\:f[g_1,\ldots,g_k] =
f'\}}\omega(f)\,.
\end{equation}
If $\omega$ satisfies~\eqref{ineq:weightingzero} then so does
$\omega[g_1,\ldots,g_k]$.
If the result of a superposition is a valid weighting (that is, negative weights
are only assigned to projections) then that superposition will be called a
\emph{proper} superposition.
\end{definition}
We remark that the superposition (of an operation with other operations) is also
known as composition. On the other hand, the superposition of a $k$-ary
weighting $\omega$ with $k$ $\ell$-ary operations $g_1,\ldots,g_k$ can be seen as multiplying $\omega$, seen as
a (row) vector, by a matrix with rows indexed by $k$-ary operations and columns
indexed by $\ell$-ary operations; given a row operation $f$ and a column
operation $f'$ the corresponding entry in the matrix is 1 if
$f[g_1,\ldots,g_k]=f'$ and 0 otherwise. The result of this matrix multiplication is a vector of weights
assigned to $\ell$-ary operations.
\begin{definition} \label{defn:wclone}
A \emph{weighted clone}, $W$, is a non-empty set of weightings of some fixed
clone $C$, called the \emph{support clone} of $W$, which is closed under nonnegative
scaling, addition of weightings of equal arity, and proper superposition with
operations from $C$. We define $\supp(W)=\bigcup_{\omega\in W}\supp(\omega)$.
\end{definition}

\begin{example}\label{ex:wclones}
Let $C$ be a clone. We give examples of two weighted clones with support
clone $C$.

\begin{enumerate}
\item $\mathbf{W}_C^0$ is the zero-valued weighted clone, that is, the
weighted clone containing, for each arity $k$, a weighting
$\omega_k\in\mathbf{W}_C^0$ with
$\omega_k(f)=0$ for all $f\in C^{(k)}$.

\item $\mathbf{W}_C$ is the weighted clone containing all possible weightings of $C$.
\end{enumerate}

By Definition~\ref{ex:wclones}, weighted clones are closed under nonnegative
scaling. Consequently, by scaling by zero, any weighted clone
$W$ with support clone $C$ contains $\mathbf{W}_C^0$, which is the
inclusion-wise smallest weighted clone with support clone $C$. On the other
hand, $\mathbf{W}_C$ is the inclusion-wise largest weighted clone with support
clone $C$.
\end{example}

\begin{example}
It is easy to show that $\supp(W)\cup\mathbf{J}_D$ is a clone for any weighted clone
$W$ defined on $D$~\cite{cccjz13:sicomp}, see
also~\cite{Kozik15:icalp,tz15:icalp}.
\end{example}

We now establish a correspondence between weightings and weighted relations, which will
allow us to link weighted clones and weighted relational clones.
\begin{definition} \label{def:wp} 
Let $\phi$ be an $m$-ary weighted relation 
on $D$ and let $\omega$ be a $k$-ary weighting of a clone $C$ of
operations on $D$.
We call $\omega$ a \emph{weighted polymorphism}
of $\phi$ if $C\subseteq\pol(\phi)$ and for any
$\tup{x}_1,\tup{x}_2,\ldots,\tup{x}_k \in D^m$ with
$\tup{x}_i\in\dom(\phi)$ 
for all $1\leq i\leq k$, we have 
\begin{equation}\label{eq:WPOL}
  \sum_{f \in C^{(k)}}\omega(f)\phi(f(\tup{x}_1,\tup{x}_2,\ldots,\tup{x}_k)) \ \leq\ 0\,.
\end{equation}
If $\omega$ is a weighted polymorphism of $\phi$ we say that $\phi$ is \emph{improved} by $\omega$.
\end{definition}
\begin{example}
Consider the class of submodular functions~\cite{Nemhauser88:optimization}.
These are precisely the functions $\phi$ defined on $D=\{0,1\}$ satisfying
$\phi(\min(\tup{x}_1,\tup{x}_2)) + \phi(\max(\tup{x}_1,\tup{x}_2)) -
\phi(\tup{x}_1) - \phi(\tup{x}_2) \leq 0$, where $\min$ and $\max$ are the two
binary operations that return the smaller and larger of its two arguments
respectively with respect to the usual order $0<1$. In other words, the set of submodular
functions is the set of weighted relations with a binary weighted polymorphism
$\omega_{sub}$ defined by: $\omega_{sub}(f)=-1$ if $f \in
\{e_1^{(2)},e_2^{(2)}\}$, $\omega_{sub}(f)=+1$ if $f \in
\{\min,\max\}$, and $\omega_{sub}(f)=0$ otherwise.
\end{example}

\begin{definition}
For any $\Gamma \subseteq \wrel_D$, we define $\wpol(\Gamma)$ to be the set of all
weightings of $\pol(\Gamma)$ which are weighted polymorphisms of all weighted
relations $\phi \in \Gamma$. We write $\wpol(\phi)$ for $\wpol(\{\phi\})$.
\end{definition}

\begin{definition}
We define $\wops_D$ to be the union of the sets $\wops_C$ over all clones $C$ on $D$.
\end{definition}

Any $W \subseteq \wops_D$ may contain weightings of \emph{different}
clones over $D$. We can then  extend each of these weightings with zeros, as
necessary, so that they are weightings of the same clone $C$, where $C$ is the
smallest clone containing all the clones associated with weightings in $W$.

\begin{definition}
We define $\wclone(W)$ to be the smallest weighted clone
containing this set of extended weightings obtained from $W$.
\end{definition}

For any $W \subseteq \wops_D$,
we denote by $\imp(W)$ the set of all weighted relations in $\wrel_D$ which are
improved by all weightings $\omega \in W$.

\begin{example}
Every weighting in $\mathbf{W}_{\mathbf{J}_D}^0$ is a weighted polymorphism of
any possible weighted relation. Hence $\imp(\mathbf{W}_{\mathbf{J}_D}^0)=\wrel_D$.

The weighted relations that are improved by all weightings are precisely those
which take at most one value. Hence $\imp(\mathbf{W}_{\mathbf{J}_D})=\rel_D$.
\end{example}

\begin{definition}
A weighted clone $W$ is called \emph{tractable} if $\imp(W)$ is tractable, and
intractable if $\imp(W)$ is intractable.
\end{definition}

The main result in~\cite{cccjz13:sicomp} establishes a 1-to-1 
correspondence between weighted relational clones and weighted clones.
\begin{theorem}[\hspace*{-0.3em}\cite{cccjz13:sicomp}]~\label{thm:wgc}
\begin{enumerate}
\item
For any finite $D$ and any finite $\Gamma \subseteq \wrel_D$, $\imp(\wpol(\Gamma)) = \wrelclone(\Gamma)$.
\item
For any finite $D$ and any finite $W\subseteq \wops_D$, $\wpol(\imp(W)) = \wclone(W)$.
\end{enumerate}
\end{theorem}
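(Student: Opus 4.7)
The plan is to establish each of the two identities by proving two inclusions separately. In both parts, the inclusion $\wrelclone(\Gamma) \subseteq \imp(\wpol(\Gamma))$ (respectively $\wclone(W) \subseteq \wpol(\imp(W))$) is a closure-property argument, while the reverse inclusion is established by contrapositive via LP duality, using Gordan's theorem or equivalently Farkas' lemma.

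For the easy direction of part~1, it suffices to show that $\imp(\wpol(\Gamma))$ is a weighted relational clone containing $\Gamma$. Containment is immediate from the definitions. For the closure properties: the equality relation is improved by every weighting because restricting to feasible tuples forces $f(\tup{x}_1,\ldots,\tup{x}_k)$ to lie on the diagonal and the inequality reduces to $0\le 0$; scaling $\phi$ by $c\ge 0$ scales the defining inequality~\eqref{eq:WPOL}; adding a constant $d$ to $\phi$ adds $d\sum_{f\in C^{(k)}}\omega(f)=0$ to the left-hand side; and closure under expressibility (Definition~\ref{def:expres}) follows by moving the minima outside the weighted sum, using that every summand is already improved. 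The analogous easy direction of part~2 is obtained by verifying that $\wpol(\imp(W))$ is closed under nonnegative scaling, addition of same-arity weightings, and proper superposition with operations from the support clone; each verification is a routine manipulation of~\eqref{eq:WPOL}.

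For the hard direction of part~1, I argue by contraposition: assuming $\phi \notin \wrelclone(\Gamma)$, I construct some $\omega \in \wpol(\Gamma)$ that fails to improve $\phi$. Fix the arity $m$ and the feasibility set $F=\feas(\phi)$, and view every $m$-ary weighted relation with feasibility contained in $F$ as a vector in $\mathbb{Q}^F$. By closure under scaling, addition of constants, and expressibility, the set of such weighted relations lying in $\wrelclone(\Gamma)$ forms a convex cone $K$ generated by the ``gadget evaluations'' coming from instances in $\VCSP(\Gamma)$. If $\phi \notin K$, a rational separation theorem produces a functional that is nonnegative on every generator of $K$ but strictly positive on $\phi$. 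The dimension of the ambient space determines the arity $k$; enumerating the tuples of $F$ as the columns of a $k\times|F|$ table, I read the separating functional as a weighting $\omega$ on $\pol(\Gamma)^{(k)}$, where the sign condition on projections comes for free because projections are improved by every weighted relation. The nonnegativity of the functional over generators translates exactly into membership $\omega \in \wpol(\Gamma)$, while the strict positivity on $\phi$ becomes the failure of~\eqref{eq:WPOL} for $\phi$ at some choice of tuples. Part~2 is handled symmetrically, separating a non-member weighting from a cone of weightings generated by superposition, scaling and addition.

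The main obstacle I anticipate is the bookkeeping required to make the dual correspondence tight: one has to choose the ambient vector space and the linear programme so that its dual variables assemble into a genuine weighting of $\pol(\Gamma)$ obeying both the sum-to-zero constraint~\eqref{ineq:weightingzero} and the restriction that negative weight may fall only on projections, and so that the objective value is exactly the left-hand side of~\eqref{eq:WPOL} evaluated at $\phi$. Once this correspondence is aligned on both sides, a single application of Gordan's theorem (or Farkas' lemma) closes both parts.
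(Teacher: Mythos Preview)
The paper does not prove Theorem~\ref{thm:wgc}; it is quoted from~\cite{cccjz13:sicomp} and used as a black box, so there is no ``paper's own proof'' to compare against. Your overall strategy---closure arguments for one inclusion, an LP-duality/separation argument for the reverse---is indeed the shape of the original proof in~\cite{cccjz13:sicomp}, but your sketch has real gaps that would need to be filled before it counts as a proof.

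Two specific concerns. First, the sentence ``the sign condition on projections comes for free because projections are improved by every weighted relation'' is not right as stated: operations are not ``improved'' by weighted relations, and a generic separating functional has no reason to be nonnegative on non-projection operations. In the actual argument the LP is set up so that the variables indexed by non-projections are constrained to be nonnegative from the outset, and the dual then delivers the sign pattern; this has to be arranged, it is not automatic. Second, your reduction to a cone in $\mathbb{Q}^F$ with $F=\feas(\phi)$ quietly assumes that $\feas(\phi)$ already lies in the relational clone generated by $\feas(\Gamma)$. If it does not, then the obstruction to $\phi\in\wrelclone(\Gamma)$ is a \emph{polymorphism} obstruction (some $f\in\pol(\Gamma)$ fails to preserve $\feas(\phi)$), and the separating weighting you need must witness this via the requirement $C\subseteq\pol(\phi)$ in Definition~\ref{def:wp}, not via inequality~\eqref{eq:WPOL}. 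Your outline does not distinguish these two cases, and the phrase ``the dimension of the ambient space determines the arity $k$'' papers over exactly the place where the choice of $k$ and the tabulation of tuples must be made explicit so that the dual variables assemble into a weighting satisfying both~\eqref{ineq:weightingzero} and the support-clone condition.
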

Thus, when trying to identify tractable valued constraint languages, it is
sufficient to consider only languages of the form $\imp(W)$ for some weighted clone $W$.

\begin{definition}
A weighting is called \emph{positive} if it assigns positive weight to at least
one operation that is not a projection. 
\end{definition}

Positive weightings are necessary for tractability: any tractable weighted clone
$W$ contains a positive weighting~\cite[Corollary
7.4]{cccjz13:sicomp}. Consequently, throughout this paper we will be only concerned with weighted
clones that contain a positive weighting.

\subsection{Properties of operations}

We finish this section with a discussion of certain types of operations.
For any $k\geq 2$, a $k$-ary operation $f$ is called \emph{sharp} if $f$ is not
a projection, but the operation obtained by equating any two inputs in $f$ is a
projection~\cite{Csakany05:minimal}.  All sharp operations must satisfy the
identity $f(x,x,\ldots,x) = x$; such operations are called \emph{idempotent}.
Ternary sharp operations may be classified according to their labels on tuples
of the form $(x,x,y), (x,y,x)$ and $(y,x,x)$, which must be equal to either $x$
or $y$. There are precisely $8$ possibilities, as listed in
Table~\ref{tab:sharp3}.

  \begin{table}[t]
  \begin{center}
  \begin{tabular}{c|*{8}{c}}
    \makebox[1.5cm]{Input} 
    & \makebox[0.75cm]{Mj} & \makebox[0.75cm]{S1} & \makebox[0.75cm]{S2}
    & \makebox[0.75cm]{S3} & \makebox[0.75cm]{P1} & \makebox[0.75cm]{P2}
    & \makebox[0.75cm]{P3} & \makebox[0.75cm]{Mn}\\
    \hline
    (x,x,y) & x & x & x & y & x & y & y & y\\
    (x,y,x) & x & x & y & x & y & x & y & y\\
    (y,x,x) & x & y & x & x & y & y & x & y\\
  \end{tabular}
  \end{center}
  \caption{Sharp ternary operations}
  \label{tab:sharp3}
  \end{table}
The first column in Table~\ref{tab:sharp3} corresponds to operations that
satisfy the identities $f(x,x,y)=f(x,y,x)=f(y,x,x)=x$ for all $x,y\in D$; such
operations are called \emph{majority} operations. The last column in the table
corresponds to operations that satisfy the identities
$f(x,x,y)=f(x,y,x)=f(y,x,x)=y$ for all $x,y\in D$; such operations are called
\emph{minority} operations. Columns 5, 6, and 7 in Table~\ref{tab:sharp3}
correspond to operations that satisfy the identities $f(y,y,x) = f(x,y,x) =
f(y,x,x)=y$ for all $x,y \in D$ (up to permutations of inputs); such operations
are called \emph{Pixley} operations~\cite{Csakany05:minimal}. 
For any $k \geq 3$, a $k$-ary operation $f$ is called a \emph{semiprojection} if
it is not a projection, but there is an index $1\leq i \leq k$ such that
$f(x_1,\ldots,x_k)=\proj^{(k)}_i$ for all $x_1,\ldots,x_k\in D$ such that
$x_1,\ldots,x_k$ are not pairwise distinct. In other words, a
semiprojection is a particular form of sharp operation where the operation
obtained by equating any two inputs is always the \emph{same} projection.
Columns 2, 3, and 4 in Table~\ref{tab:sharp3} correspond to semiprojections.

It turns out that the only sharp operations of arity $k \geq 4$ are semiprojections.
In other words, given an operation of arity $\geq 4$, if every operation arising from the
identification of two variables is a projection, then these projections
coincide.
\begin{lemma}[\'Swierczkowski's Lemma~\cite{Swierczkowski60:lemma}]
\label{lem:swier}
The only sharp operations of arity $k\geq 4$ are semiprojections.
\end{lemma}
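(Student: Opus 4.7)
The plan is to analyze, for each pair $S = \{i, j\} \subseteq \{1, \ldots, k\}$, the $(k-1)$-ary operation obtained from $f$ by identifying $x_i$ with $x_j$. By hypothesis this is a projection $\pi_S$; I will encode it as $P(S) \subseteq \{1, \ldots, k\}$ by setting $P(S) = S$ if $\pi_S$ projects onto the merged coordinate, and $P(S) = \{c\}$ otherwise, where $c \notin S$ is the unique index onto which $\pi_S$ projects. It then suffices to exhibit a single index $i^* \in \bigcap_S P(S)$, since this yields $f(\tup x) = x_{i^*}$ whenever any two entries of $\tup x$ agree, which is the semiprojection condition.

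The key step is a pairwise intersection claim: for any two disjoint pairs $S_1, S_2$ (which exist because $k \geq 4$) one has $P(S_1) \cap P(S_2) \neq \emptyset$. I would argue this by evaluating $f$ on tuples with $x_{i_1} = x_{j_1} = u$, $x_{i_2} = x_{j_2} = v$ and arbitrary remaining coordinates. Both identifications then produce the same output of $f$, so $x_{\pi_{S_1}} = x_{\pi_{S_2}}$ as functions of the free variables. If $\pi_{S_1} \notin S_1 \cup S_2$, then $x_{\pi_{S_1}}$ is an independent free coordinate, forcing $\pi_{S_2} = \pi_{S_1}$. If $\pi_{S_1} \in S_1$ then $x_{\pi_{S_1}} = u$, which is only possible when $\pi_{S_2} \in S_1$; the case $\pi_{S_1} \in S_2$ is symmetric. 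In every case one obtains an element common to $P(S_1)$ and $P(S_2)$.

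The final step promotes the pairwise claim to a global witness. Set $S_0 = \{1, 2\}$. In Case A, assume $P(S_0) = \{a\}$ for some $a \notin S_0$; I will show $a \in P(S)$ for every $S$. If $S$ is disjoint from $S_0$, this is the pairwise claim directly. If $S = \{1, j\}$ or $\{2, j\}$ with $j \notin \{1, 2, a\}$, substituting $x_1 = x_2 = x_j = u$ with the remaining coordinates free yields $f = x_a$ from $S_0$, and since $x_a$ is independent of $u$, comparing with $f = x_{\pi_S}$ forces $\pi_S = a$. For the remaining overlapping case $S = \{1, a\}$ (symmetrically $S = \{2, a\}$), I pair the $S$-identification with an auxiliary $\{2, j'\}$ that is disjoint from $S$, where $j' \in \{3, \ldots, k\} \setminus \{a\}$ exists because $k \geq 4$ and whose projection index is $a$ by the previous sub-case; substituting $x_1 = x_a = u$ and $x_2 = x_{j'} = v$ then forces $\pi_S \in S$, so $a \in P(S)$. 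In Case B, $P(S_0) = S_0$; applying the pairwise claim to $S_0$ and $\{3, 4\}$ yields $P(\{3, 4\}) \cap \{1, 2\} \neq \emptyset$, hence $P(\{3, 4\}) = \{a'\}$ for some $a' \in \{1, 2\}$, and we reduce to Case A with $\{3, 4\}$ playing the role of $S_0$ and $a'$ the role of $a$.

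The main obstacle is the bookkeeping in this last step: for each type of pair $S$ one must select an auxiliary identification whose free-variable structure is rich enough to separate the possible values of $\pi_S$. The essential use of the hypothesis $k \geq 4$ is precisely the availability of disjoint witness pairs, which is what allows the projection index determined at $S_0$ to be propagated to every remaining pair.
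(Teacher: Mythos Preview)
The paper does not give its own proof of this lemma; it is stated with a citation to \'Swierczkowski's original paper and used as a black box. So there is nothing to compare against.

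Your argument is correct and is essentially the classical proof. The encoding $P(S)$ and the pairwise intersection claim for disjoint pairs are exactly the right ingredients, and your case analysis in the global step is complete: every pair $S$ is either $S_0$ itself, disjoint from $S_0$, of the form $\{1,j\}$ or $\{2,j\}$ with $j\notin\{1,2,a\}$, or one of $\{1,a\},\{2,a\}$, and you handle each. The reduction in Case~B to Case~A via the pair $\{3,4\}$ is clean. One small point worth making explicit when you write it up: in the pairwise claim, the sub-case $P(S_1)=S_1$ and $P(S_2)=S_2$ simultaneously is impossible (it would force $u=v$ for all choices), which is implicitly covered by your ``$\pi_{S_2}\in S_1$'' conclusion but deserves a word since $S_1\cap S_2=\emptyset$ is precisely what rules it out. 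Also note that the ``not a projection'' clause in the definition of semiprojection is inherited directly from the sharpness hypothesis, so your identity argument is all that is needed.
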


We will need a technical lemma. But first we will introduce some notation. We denote by
$\perm_k$ the set of $k$ cyclic permutations on $\{1,\ldots,k\}$. We denote by $\circ$ the
composition of two permutations, that is, for any $\sigma,\pi\in\perm_k$ we have
$\sigma\circ\pi\in\perm_k$ is defined by $\sigma\circ\pi(x)=\sigma(\pi(x))$. For a $k$-ary operation
$f$ and a permutation $\pi\in\perm_k$ we will denote by $f^\pi$ the
operation $f^\pi=f[\proj^{(k)}_{\pi(1)},\ldots,\proj^{(k)}_{\pi(k)}]$, that is,
$f^\pi(x_1,\ldots,x_k)=f(x_{\pi(1)},\ldots,x_{\pi(k)})$.

\begin{lemma}\label{lem:perm}
Let $W$ be a weighted clone and $\omega\in W$ a positive $k$-ary weighting. Then there
is a positive $k$-ary weighting $\mu\in W$ with the following properties:
\begin{enumerate}
\item $\displaystyle\supp(\mu)=\bigcup_{f\in\supp(\omega)}\bigcup_{\pi\in\perm_k} f^\pi$;
\item $\mu(\proj^{(k)}_i)=-1$ for every $1\leq i\leq k$; 
\item $\mu(f)=\mu(f^\pi)$ for every $f\in\supp(\mu)$ and
$\pi\in\perm_k$.
\end{enumerate}
\end{lemma}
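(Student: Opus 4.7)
The plan is to obtain $\mu$ by symmetrising $\omega$ over the cyclic group $\perm_k$ and then rescaling. For each $\pi\in\perm_k$ I set $\omega^\pi := \omega[\proj^{(k)}_{\pi(1)},\ldots,\proj^{(k)}_{\pi(k)}]$. A direct computation, using that $f\mapsto f^\pi$ is a bijection on $k$-ary operations sending projections to projections (explicitly $(\proj^{(k)}_i)^\pi=\proj^{(k)}_{\pi(i)}$), gives $\omega^\pi(f') = \omega(f'^{\pi^{-1}})$. In particular $\omega^\pi$ assigns negative weight only to projections, so the superposition is proper and $\omega^\pi\in W$. By closure of $W$ under addition, $\mu_0 := \sum_{\pi\in\perm_k}\omega^\pi \in W$ as well.

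I would then verify the three required properties for $\mu_0$, up to rescaling. Rewriting $\mu_0(f') = \sum_{\tau\in\perm_k}\omega(f'^\tau)$ (using that $\pi\mapsto\pi^{-1}$ is a bijection of $\perm_k$) and the identity $(f'^\sigma)^\tau = f'^{\tau\circ\sigma}$, the reindexing $\tau\mapsto\tau\circ\sigma$ shows that $\mu_0(f'^\sigma) = \mu_0(f')$ for every $\sigma\in\perm_k$, which is property~(3). For property~(1), if $f'$ is not a projection then none of the $f'^\tau$ are either, so every term $\omega(f'^\tau)$ is non-negative; the sum is strictly positive precisely when some $f'^\tau\in\supp(\omega)$, equivalently when $f' = g^{\tau^{-1}}$ for some $g\in\supp(\omega)$ and $\tau\in\perm_k$. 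Conversely, for any $g\in\supp(\omega)$ and $\pi\in\perm_k$ the sum $\mu_0(g^\pi)=\sum_\tau\omega(g^\tau)$ contains the strictly positive term $\omega(g)$, so $g^\pi\in\supp(\mu_0)$. Thus $\supp(\mu_0) = \{f^\pi : f\in\supp(\omega),\, \pi\in\perm_k\}$.

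For projections, the transitive action of $\perm_k$ on $\{1,\ldots,k\}$ gives $\mu_0(\proj^{(k)}_i) = \sum_{j=1}^k \omega(\proj^{(k)}_j) =: S$ independent of $i$. Because $\omega$ is positive, some non-projection has strictly positive weight, so $\sum_{f\text{ non-proj}}\omega(f) > 0$; combined with the weighting identity~\eqref{ineq:weightingzero} and the fact that only projections can receive negative weight, this forces $S<0$. I then set $\mu := -\mu_0/S$. Non-negative rescaling gives $\mu\in W$ with $\mu(\proj^{(k)}_i)=-1$, while $\supp(\mu)=\supp(\mu_0)$ and cyclic symmetry are preserved; positivity of $\mu$ is automatic since $\supp(\mu)\supseteq\supp(\omega)\neq\emptyset$ consists of non-projections.

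The main difficulty is of a bookkeeping nature: tracking the action of $\perm_k$ consistently (checking $(g^\pi)^\tau = g^{\tau\circ\pi}$ and applying the correct bijective reindexing $\pi\mapsto\pi^{-1}$ and $\tau\mapsto\tau\circ\sigma$ on the cyclic group), and verifying that each $\omega^\pi$ really is a proper superposition so that $\sum_\pi \omega^\pi$ stays inside $W$. The qualitative sign $S<0$ only uses the definition of a positive weighting together with the zero-sum identity~\eqref{ineq:weightingzero}, and the final normalisation is just division by a positive scalar.
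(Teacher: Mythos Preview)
Your proposal is correct and follows essentially the same route as the paper's proof: symmetrise $\omega$ over the cyclic group $\perm_k$ via $\mu_0=\sum_{\pi\in\perm_k}\omega[\proj^{(k)}_{\pi(1)},\ldots,\proj^{(k)}_{\pi(k)}]$ and then rescale by the common (negative) projection weight. Your write-up is in fact slightly more careful than the paper's in two places: you verify explicitly that each $\omega^\pi$ is a proper superposition (so the sum lies in $W$), and you get the sign of the final rescaling right (the paper writes $\mu=\frac{1}{w}\omega'$ with $w<0$, which should be $-\frac{1}{w}$).
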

 
\begin{proof}
Let 
\begin{equation}\label{ineq:trick}
\omega'\ =\
\sum_{\pi\in\perm_k}\omega[\proj^{(k)}_{\pi(1)},\ldots,\proj^{(k)}_{\pi(k)}]\,.
\end{equation}
Let $f\in\supp(\omega)$ and
$\pi\in\perm_k$. We have
\begin{multline}\label{eq:long}
\omega'(f)\ =\
\sum_{g\in\supp(\omega)}\
\sum_{\substack{\text{$\sigma\in\perm_k$}\\\text{$g^{\sigma}=f$}}}\omega(g)
\ =\ 
\sum_{g\in\supp(\omega)}\
\sum_{\substack{\text{$\sigma\circ\pi\in\perm_k$}\\\text{$g^{\sigma\circ\pi}=f^\pi$}}}\omega(g)
\ =\ \\
\ =\ 
\sum_{g\in\supp(\omega)}\
\sum_{\substack{\text{$\sigma'\in\perm_k$}\\\text{$g^{\sigma'}=f^\pi$}}}\omega(g)
\ = \ 
\omega'(f^\pi)\,.
\end{multline}
Thus $\omega'$ satisfies the first and the third property of the lemma. 

Since $\omega$ is positive we have that $\sum_{i=1}^k\omega(\proj^{(k)}_i)<0$
and thus, by~\eqref{eq:long}, we have $\omega'(\proj^{(k)}_i)<0$ for every
$1\leq i\leq k$. Let $\omega'(\proj^{(k)}_1)=w$. By~\eqref{eq:long} again,
$\omega'(\proj^{(k)}_i)=w$ for every $1\leq i\leq k$. Thus
$\mu=\frac{1}{w}\omega'$ satisfies all three properties of the lemma.
\end{proof}

\subsection{Cores}

We show that with respect to tractability, the only
interesting weighted clones (and thus weighted relational clones) are those
whose unary weightings can assign positive weight only to very special
operations.

The idea of cores and rigid cores originated in the algebraic approach to
CSPs~\cite{Jeavons98:algebraic,Bulatov05:classifying} and has also proved useful in the complexity
classification of finite-valued CSPs~\cite{hkp14:sicomp,tz13:stoc}.

\begin{definition}
A weighted clone $W$ is a \emph{core} if for every unary weighting $\omega\in W$
every operation $f\in\supp(\omega)$ is bijective.
A valued constraint language $\Gamma$ is a core if $W=\wpol(\Gamma)$ is a
core.
\end{definition}

\begin{theorem}\label{thm:squash}
Let $\Gamma$ be a valued constraint language on $D$. 
If $\Gamma$ is not a core then there is a core valued constraint language 
$\Gamma'$ on $D'\subseteq D$ such that $\Gamma$ is tractable if and only if
$\Gamma'$ is tractable and $\Gamma$ is intractable if and only if $\Gamma'$ is
intractable.
\end{theorem}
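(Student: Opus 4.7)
The plan is to produce $\Gamma'$ by iteratively restricting the domain using a non-bijective operation in the support of a unary weighting, and to show that each such restriction preserves both tractability and intractability.

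First I would extract the key operation. Since $\Gamma$ is not a core, by definition $W=\wpol(\Gamma)$ contains a unary weighting $\omega$ whose support contains a non-bijective $f:D\to D$. The only unary projection is the identity $\mathrm{id}$, so $\omega(g)\geq 0$ for all $g\ne\mathrm{id}$, $\omega(\mathrm{id})<0$, and $\sum_g\omega(g)=0$. Let $D'=f(D)\subsetneq D$, and define $\Gamma'$ to be the language on $D'$ obtained by restricting every $\phi\in\Gamma$ to tuples in $(D')^m$ (where $m$ is the arity of $\phi$). Note $|D'|<|D|$.

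The heart of the proof is showing that $\VCSP(\Gamma)$ and $\VCSP(\Gamma')$ are polynomial-time equivalent. One direction is immediate: any instance of $\VCSP(\Gamma')$ is, syntactically, an instance of $\VCSP(\Gamma)$ with labels restricted to $D'\subseteq D$, so tractability of $\Gamma$ implies tractability of $\Gamma'$, and hardness of $\Gamma'$ implies hardness of $\Gamma$. For the converse, take any instance $\inst\in\VCSP(\Gamma)$ and any assignment $s:V\to D$. Applying the weighted polymorphism inequality~\eqref{eq:WPOL} to each constraint $\phi_i(\tup{x}_i)$ of $\inst$ (with the unary weighting $\omega$, applied to the tuple $s(\tup{x}_i)$) and summing over $i$ gives
\begin{equation*}
\sum_{g:D\to D}\omega(g)\,\inst(g\circ s)\ \leq\ 0\,.
\end{equation*}
Writing $|\omega(\mathrm{id})|=\sum_{g\ne\mathrm{id}}\omega(g)$ and rearranging, we obtain $\sum_{g\ne\mathrm{id}}\omega(g)\bigl[\inst(g\circ s)-\inst(s)\bigr]\leq 0$. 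If $s$ is an optimal (finite-valued) assignment, then $\inst(g\circ s)\geq\inst(s)$ for every $g$, so every summand is non-negative; since each is also $\leq 0$ in total with non-negative coefficients $\omega(g)$, each term with $\omega(g)>0$ must vanish. In particular $\inst(f\circ s)=\inst(s)$, and $f\circ s$ takes values in $D'$. The case where $\inst$ has no finite-valued assignment is trivial since the optimum is $\infty$ over both $D$ and $D'$. Thus the optimum of $\inst$ is already achieved on $D'$, which means $\inst$ and its counterpart in $\VCSP(\Gamma')$ have the same optimum. This yields a polynomial-time reduction from $\VCSP(\Gamma')$ to $\VCSP(\Gamma)$ via identity on instances (preserving optima), so $\Gamma'$ is tractable iff $\Gamma$ is, and intractable iff $\Gamma$ is.

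Finally, if $\Gamma'$ is still not a core, iterate the construction. Because $|D|$ is finite and strictly decreases at each step, the process terminates after at most $|D|-1$ iterations with a core language on some $D''\subseteq D$, which equals $\Gamma'$ in the statement. The main subtlety, and the step I expect to need the most care, is the handling of the infeasibility values $\infty$ in the averaging argument above: one must ensure that the inequality $\sum_g\omega(g)\inst(g\circ s)\leq 0$ is applied only when both sides are well-defined, which is guaranteed by restricting attention to finite-valued optimal $s$ and using the convention $0\cdot\infty=\infty$ so that the inequality from~\eqref{eq:WPOL} for each individual $\phi_i$ holds with $s(\tup{x}_i)\in\dom(\phi_i)$.
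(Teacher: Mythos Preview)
Your proposal is correct and follows essentially the same approach as the paper's proof: pick a non-bijective $f$ in the support of a unary weighted polymorphism, use the weighted-polymorphism inequality to show that applying $f$ to an optimal assignment preserves optimality (hence the optimum is already attained on $D'=f(D)$), restrict to $D'$, and iterate until the resulting language is a core. The paper phrases the averaging argument per weighted relation and then lifts it to instances, whereas you sum the constraint inequalities to obtain it for the whole instance directly; both arrive at the same key fact $\inst(f\circ s)=\inst(s)$ for optimal $s$.

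One presentational wrinkle: your ``immediate'' direction is not actually immediate. Lifting a $\VCSP(\Gamma')$-instance $I'$ to a $\VCSP(\Gamma)$-instance $I$ (by replacing each restricted $\phi'$ with its parent $\phi$) could in principle lower the optimum, since one is now optimising over all of $D$ rather than $D'$. What rules this out is precisely the averaging argument you give in the ``converse'' paragraph, which shows the optimum of $I$ is already achieved on $D'$; so both reductions rely on that same step. The paper likewise uses the averaging step for both directions. Once you fold that observation in, your write-up matches the paper's argument.
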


\begin{proof}
Let $\omega\in\wpol(\Gamma)$  be a positive unary weighting. By scaling by
$1/|\omega(\proj^{(1)}_1)|$, we have 
$\omega(\proj^{(1)}_1)=-1$ and thus $\sum_{f\in\supp(\omega)}\omega(f)=1$. For
any weighted relation $\phi\in\Gamma$ of arity $m$ and any $m$-tuple
$\tup{x}\in
D^m$, we have $(\ast)$
$\phi(\tup{x})\geq\sum_{f\in\supp(\omega)}\omega(f)\phi(f(\tup{x}))$. Suppose
that $\tup{y}$ is a
minimal-cost assignment for $\phi$; that is, $\phi(\tup{y})\leq\phi(\tup{x})$ for all
$\tup{x}\in
D^m$. Then for every $f\in\supp(\omega)$, we have $f(\tup{y})$ is also a minimal-cost
assignment. Assume for contrary that for some $f'\in\supp(\omega)$, we have
$\phi(f'(\tup{y}))>\phi(\tup{y})$; write $\phi(f'(\tup{y}))=\phi(\tup{y})+\epsilon$, where $\epsilon>0$.
Then we claim that there is an $f\in\supp(\omega)$ such that
$\phi(f(\tup{y}))<\phi(\tup{y})$, which contradicts the choice of $\tup{y}$. To prove the claim
assume that $\phi(f(\tup{y}))\geq\phi(\tup{y})$ for every
$f\in\supp(\omega)\setminus\{f'\}$. Hence we get
$\sum_{f\in\supp(\omega)}\omega(f)\phi(f(\tup{y})) =
\sum_{f\in\supp(\omega)\setminus\{f'\}}\omega(f)\phi(f(\tup{y}))+\omega(f')\phi(f'(\tup{y}))
\geq (1-\omega(f')\phi(\tup{y})+\omega(f')(\phi(\tup{y})+\epsilon) =
\phi(\tup{y})+\omega(f')\epsilon>\phi(\tup{y})$, which contradicts $(\ast)$.

Consequently, given an instance $I\in\VCSP(\Gamma)$ and a solution $s$ to
$I$, we can take any unary weighting $\omega\in\wpol(\Gamma)$ and any unary operation
$f\in\supp(\omega)$ and get another solution $f(s)$ to $I$; the solution
$f(s)$ uses only labels from $f(D)$.
Consider a unary non-bijective operation $f\in\supp(\omega)$ with the minimum
$|f(D)|$ over all unary weightings $\omega\in\wpol(\Gamma)$.
We denote by $D'=f(D)$ the range of $f$. We denote by $\Gamma'$ language
containing the restriction of every $\phi\in\Gamma$ to $D'$.

Given any instance $I\in\VCSP(\Gamma)$ we can create, by replacing
each weighted relation $\phi$ in $I$ by $\phi'$, in polynomial time an instance
$I'\in\VCSP(\Gamma')$ with the following properties: any solution to $I'$ is
also a solution to $I$, and for any solution $s$ to $I$ we have that $f(s)$ is a
solution to $I'$. If $\Gamma'$ is not a core we can repeat the same construction
with $\Gamma'$.
\end{proof}

Theorem~\ref{thm:squash} was independently obtained
in~\cite{Kozik15:icalp}, where it was also shown that, with respect to
tractability, it suffices to restrict to rigid cores.

\begin{definition}
A weighted clone $W$ is a \emph{rigid core} if the only unary operation in the
support clone of $W$ is the unary projection $\proj^{(1)}_1$.
A valued constraint language $\Gamma$ is a rigid core if $W=\wpol(\Gamma)$ is a
rigid core; that is if the only unary polymorphism of $\Gamma$ is $\proj^{(1)}_1$.
\end{definition}

\begin{theorem}[\hspace*{-0.3em}\cite{Kozik15:icalp}]\label{thm:rigid}
Let $\Gamma$ be a valued constraint language on $D$. 
If $\Gamma$ is not a rigid core then there is a rigid core valued constraint language 
$\Gamma'$ on $D'\subseteq D$ such that $\Gamma$ is tractable if and only if
$\Gamma'$ is tractable and $\Gamma$ is intractable and only if $\Gamma'$ is
intractable.
\end{theorem}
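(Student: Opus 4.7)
The plan is to proceed in two stages. First, apply Theorem \ref{thm:squash} to reduce to the case where $\Gamma$ is itself a core on $D$; this step is tractability-preserving and lets us focus on the remaining rigidity condition. Let $G$ denote the group of unary polymorphisms of $\Gamma$: by the core property each $\sigma \in G$ is a bijection, so $G$ is a subgroup of the symmetric group on $D$.

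Second, define
\[
\Gamma' \ :=\ \Gamma \cup \{R_a : a \in D\},
\]
where $R_a \in \rel_D^{(1)}$ is the singleton relation taking value $0$ at $a$ and $\infty$ elsewhere. That $\Gamma'$ is a rigid core is immediate: any unary polymorphism $f$ of $\Gamma'$ must preserve each $R_a$, which forces $f(a)=a$ for every $a \in D$, hence $f = \proj_1^{(1)}$. Moreover, intractability of $\Gamma$ trivially implies intractability of $\Gamma'$ since $\Gamma \subseteq \Gamma'$.

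The nontrivial direction is to show that if $\Gamma$ is tractable then so is $\Gamma'$. Given a finite $\Gamma'_0 \subseteq \Gamma'$ and an instance $I' \in \VCSP(\Gamma'_0)$, preprocess the singleton constraints: if some variable is pinned to two distinct elements then $I'$ is infeasible, otherwise each variable is either free or carries a prescribed label from $D$. I expect the main obstacle to be simulating these prescribed labels inside $\VCSP(\Gamma)$ in the presence of $G$, since an arbitrary optimal solution of the unpinned instance can be postcomposed by any $\sigma \in G$ without changing its cost. The standard resolution, the VCSP analogue of the classical CSP fact that a core's relational clone contains every singleton, is to build a gadget in $\wrelclone(\Gamma)$ with $|D|$ marked variables $x_a$ whose zero-cost assignments are precisely the endomorphisms of $\Gamma$, viewed as labellings $x_a \mapsto \sigma(a)$; since $\Gamma$ is a core, these are exactly the elements of $G$ acting on $D$. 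Each pinned pair consisting of a variable $y$ and its prescribed label $a$ is then identified with $x_a$ via $\phi_=$. The resulting $\VCSP(\Gamma)$ instance has the same optimum as $I'$, because any optimal assignment of the gadget instance can be normalised by applying the appropriate $\sigma^{-1} \in G$ throughout, which leaves both feasibility and cost invariant while matching the prescribed labels. Existence of the gadget ultimately rests on Theorem \ref{thm:wgc} applied to $\wpol(\Gamma)$.
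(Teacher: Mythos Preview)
The paper does not prove Theorem~\ref{thm:rigid}; it is stated with attribution to Kozik and Ochremiak~\cite{Kozik15:icalp} and no argument is supplied. So there is no in-paper proof to compare your proposal against. Your two-stage strategy---pass to a core via Theorem~\ref{thm:squash}, then adjoin the unary singleton relations $R_a$---is indeed the standard route, and it is essentially what the cited reference does.

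That said, your sketch has a genuine gap at the normalisation step. You assert that applying $\sigma^{-1}$ to an optimal assignment of the combined instance ``leaves both feasibility and cost invariant''. In the CSP setting this is automatic, since a bijective unary polymorphism and its inverse preserve feasibility and there is nothing more to check. In the VCSP setting a bijective unary polymorphism $\sigma$ of $\Gamma$ need not satisfy $\phi(\sigma(\tup{x})) = \phi(\tup{x})$ for every $\phi \in \Gamma$ and every $\tup{x}\in\feas(\phi)$; being a polymorphism only guarantees $\sigma(\tup{x}) \in \feas(\phi)$. Hence postcomposing the whole assignment by $\sigma^{-1}$ may change the cost of the non-gadget constraints, and your equality-of-optima claim is unjustified as written.

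A related issue is definitional. The paper's notion of \emph{core} says that every unary \emph{weighted} polymorphism of $\Gamma$ has only bijections in its support; this is weaker than the CSP notion that every unary \emph{polymorphism} is bijective. After Theorem~\ref{thm:squash} you therefore cannot simply declare that the set of unary polymorphisms is a group $G$. In particular, $R_a\in\wrelclone(\Gamma)$ would force every operation in $\pol(\Gamma)$ to be idempotent (since $\pol(\Gamma)\subseteq\pol(R_a)$ is required by Definition~\ref{def:wp}), which is exactly the rigid-core condition you are trying to obtain, not something you already have. The argument in~\cite{Kozik15:icalp} has to work more carefully with weighted polymorphisms and the Galois connection of Theorem~\ref{thm:wgc}; your final sentence points at the right tool, but the passage from ``$\Gamma$ is a core'' to ``adding all $R_a$ preserves tractability'' needs substantially more than what you have written.
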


It is not hard to show that a weighted clone $W$ (a valued constraint language
$\Gamma$) is a rigid core if and only if all operations in the support clone of $W$
(polymorphisms of $\Gamma$, respectively) are idempotent.

\section{Conditions for Tractability}
\label{sec:wclones}

In this section we will present our main results.

Creed and \v{Z}ivn\'y obtained the following result on the structure of 
weighted clones with a positive weighting~\cite[Theorem~2]{cz11:cp-mwc}; see also
\cite[Corollary~7.7]{cccjz13:sicomp}.

\begin{theorem}[\hspace*{-0.3em}\cite{cz11:cp-mwc}]~\label{thm:class1}
Any weighted clone $W$ containing a positive weighting
contains a weighting whose support is either:
\begin{enumerate}
  \item a set of unary operations that are not projections; or
  \item a set of binary idempotent operations that are not projections; or
  \item a set of ternary operations that are majority operations, minority operations, Pixley operations or semiprojections; or
  \item a set of $k$-ary semiprojections (for some $k > 3$).
\end{enumerate}

\end{theorem}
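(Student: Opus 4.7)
The plan is to focus on a positive weighting $\omega\in W$ of minimum arity $k$ and to show that its support is forced into one of the four prescribed forms via a descent to arity $k-1$. As a preparatory normalisation, if $k\geq 2$ I would apply Lemma~\ref{lem:perm} to replace $\omega$ by a positive $k$-ary weighting $\mu\in W$ with $\mu(\proj^{(k)}_i)=-1$ for every $1\leq i\leq k$. Since all projections then carry strictly negative weight, none of them lies in $\supp(\mu)$, so it suffices to control the non-projection operations in $\supp(\mu)$.

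The case $k=1$ is essentially immediate: the non-projections contribute non-negatively to $\sum_f\omega(f)=0$ with at least one strictly positive contribution, so $\proj^{(1)}_1$ must carry strictly negative weight, placing $\supp(\omega)$ entirely in case~(1). For $k\geq 2$, the key move is to superpose $\mu$ with $k$ chosen $(k-1)$-ary projections, i.e.\ to identify two of the $k$ arguments; this is a proper superposition (the only contributions to a non-projection coordinate of the output come from non-projections of $\mu$, which carry non-negative weight), so the result lies in $W$. For $k=2$, consider $\mu[\proj^{(1)}_1,\proj^{(1)}_1]$: if some $f\in\supp(\mu)$ had $f(x,x)\neq x$, then the unary operation $x\mapsto f(x,x)$ would acquire strictly positive weight, producing a positive unary weighting and contradicting minimality of $k$. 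Hence every $f\in\supp(\mu)$ is idempotent, giving case~(2). For $k=3$, the same argument applied to each of the three ways of identifying two of the arguments forces $f(x,x,y)$, $f(x,y,x)$ and $f(y,x,x)$ to be projections for every $f\in\supp(\mu)$; that is, $f$ is sharp, and Table~\ref{tab:sharp3} then classifies $f$ as a majority, minority, Pixley or semiprojection operation, giving case~(3). For $k\geq 4$, the analogous argument again makes every $f\in\supp(\mu)$ sharp, and \'Swierczkowski's Lemma~\ref{lem:swier} upgrades this to $f$ being a semiprojection, yielding case~(4).

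The most delicate point is the pre-normalisation via Lemma~\ref{lem:perm}: a raw positive weighting could assign positive weight to some projections, which would then sit in the support and contaminate the conclusions. Averaging over cyclic permutations equalises the weights on all $k$ projections, and positivity then forces this common value to be strictly negative, removing projections from the support. Once this is in place, the minimum-arity descent runs cleanly: each identification of two arguments either produces a lower-arity positive weighting (contradicting minimality) or forces the operations in $\supp(\mu)$ to collapse to a projection under that identification, which is precisely idempotence for $k=2$ and sharpness for $k\geq 3$.
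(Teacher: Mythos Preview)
Your proposal is correct and follows the standard argument for this result (originally from~\cite{cz11:cp-mwc,cccjz13:sicomp}): take a positive weighting of minimum arity, normalise via Lemma~\ref{lem:perm} so that all projections carry weight $-1$, and observe that identifying any two arguments yields a proper superposition that cannot be positive (by minimality), forcing every operation in the support to be sharp; the classification of sharp operations (Table~\ref{tab:sharp3} for $k=3$, \'Swierczkowski's Lemma for $k\geq 4$) then gives the four cases. The paper does not reprove Theorem~\ref{thm:class1} itself but relies on exactly this argument when it writes ``(If some $f\in\supp(\omega)$ were not sharp then we could show, as in the proof of Theorem~\ref{thm:class1}, that case~(1) holds)'' in the proof of Theorem~\ref{thm:class2}.
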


Since rigid cores require all unary weightings be zero-valued, case~(1) of
Theorem~\ref{thm:class1} can be easily eliminated. Moreover, using Gordan's
Theorem (a variant of Farkas' Lemma) we will strengthen Theorem~\ref{thm:class1}
by refining the ternary case, thus obtaining the following result, which is the
main result of this paper.

\begin{theorem}[\textbf{Main}]\label{thm:class2}
Any weighted clone $W$ that is a rigid core and contains a positive weighting
also contains a weighting whose support is either:
\begin{enumerate}
  \item a set of binary idempotent operations that are not projections; or
  \item a set of ternary operations that are either:
  \begin{enumerate}
   \item a set of majority operations; or
   \item a set of minority operations; or
  \item  a set of majority operations with total weight $2$ and a set
  of minority operations with total weight $1$; or
  \end{enumerate}
  \item a set of $k$-ary semiprojections (for some $k \geq 3$).
\end{enumerate}
\end{theorem}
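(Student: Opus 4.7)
The plan is to bootstrap on Theorem~\ref{thm:class1}. Its case~(1) is ruled out by the rigid-core hypothesis, since every unary operation in the support clone is a projection. Its case~(2) coincides with case~(1) of Theorem~\ref{thm:class2}, and its case~(4) with case~(3). The real work is therefore to sharpen case~(3) of Theorem~\ref{thm:class1}: starting from a ternary positive weighting $\omega \in W$ whose support sits inside the eight classes of Table~\ref{tab:sharp3}, I have to produce a weighting in $W$ whose support falls in one of (2a), (2b), (2c), or (3) with $k=3$. In particular, Pixley-type operations must be shown to be removable.

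First I would normalise $\omega$ using Lemma~\ref{lem:perm}, passing to a cyclically symmetric $\mu \in W$ with $\mu(\proj_i^{(3)}) = -1$ for $i = 1, 2, 3$. The support of $\mu$ then splits into cyclic orbits belonging to four kinds: majorities $M$, minorities $N$, Pixleys $P$ (orbits in which the three rotational types P1, P2, P3 carry equal weight), and ternary semiprojections $S$ (organised analogously via S1, S2, S3). Let $a, b, c, d$ denote the total weights $\mu$ places on $M, N, P, S$; the sum-to-zero constraint reads $a + b + c + d = 3$.

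The heart of the proof is an application of Gordan's theorem. Consider the convex cone $\mathcal{C} \subseteq W$ generated by $\mu$ together with its proper superpositions $\mu[g_1, g_2, g_3]$, where $g_1, g_2, g_3$ range over projections and operations in $\supp(\mu)$, each subsequently symmetrised over cyclic rotations to stay inside the cyclically symmetric subcone of $W$. The existence of an element of $\mathcal{C}$ whose support avoids $P$ is a linear feasibility question; Gordan's theorem yields either such a weighting, or a nonnegative dual functional vanishing on every Pixley-free element of $\mathcal{C}$. The key combinatorial inputs are the collapsing superpositions $\mu[\proj_i^{(3)}, \proj_j^{(3)}, \proj_k^{(3)}]$, which by sharpness and Table~\ref{tab:sharp3} send each of $M, N, P, S$ into a specific projection; dualising these identities constrains the putative Gordan certificate enough to contradict its positivity on $\mu$ itself, so a Pixley-free weighting in $W$ is produced.

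Once Pixleys are out of the support, a short case analysis completes the argument. A weighting supported purely on $M$, on $N$, or on $S$ lands directly in case~(2a), (2b), or (3) with $k=3$. In the truly mixed cases, I would again use the collapsing superpositions: they produce projection-only weightings whose coefficients are affine combinations of $a, b, d$. A second application of Gordan's theorem to this residual polytope shows its extreme rays form a discrete list, pinning down either a pure type or the specific $a : b = 2 : 1$ split of case~(2c). I expect the main obstacle to be the middle step: choosing a set of generators of $\mathcal{C}$ large enough that Gordan's alternative delivers a usable contradiction yet small enough that the dualised system remains tractable, while also ensuring every intermediate superposition stays proper so that the result is a valid weighting in $W$. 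The dimensional reduction obtained from cyclic symmetrisation via Lemma~\ref{lem:perm} should be critical in making this bookkeeping manageable.
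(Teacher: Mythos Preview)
Your high-level plan matches the paper's: reduce to the ternary case via Theorem~\ref{thm:class1}, normalise via Lemma~\ref{lem:perm}, apply Gordan to purge unwanted sharp types, then case-analyse. However, the key technical step has a gap. The ``collapsing superpositions'' $\mu[\proj_i^{(3)},\proj_j^{(3)},\proj_k^{(3)}]$ with a repeated index, which you single out as the main combinatorial input, are useless in the Gordan system: when $\supp(\mu)$ consists only of sharp operations, such a superposition is supported entirely on projections, so its value on every Pixley operation is zero and it contributes a zero column to the matrix $A$. It therefore imposes no constraint on the dual certificate $y$ and cannot drive the contradiction you describe.

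The paper instead takes, as columns alongside the identity superposition, the superpositions $\omega[p_1,p_2,p_3]$ for each triple of \emph{related} Pixley operations $(p_1,p_2,p_3)\in P_1\times P_2\times P_3$. Under such a substitution the three projections become $p_1,p_2,p_3$, while each semiprojection of type $i$ becomes a Pixley of type $i$; this is what pinches the dual $y$: the identity column forces some related triple to have $y_{p_1}+y_{p_2}+y_{p_3}>0$, and the corresponding Pixley-triple column then evaluates to at most $(w(S)/3-1)(y_{p_1}+y_{p_2}+y_{p_3})<0$. A separate, analogous Gordan step with related \emph{semiprojection} triples then removes semiprojections in the mixed case---the paper performs two eliminations, not one. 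Finally, the majority/minority endgame is not a second Gordan or extreme-ray argument but a direct construction: if majorities carry total weight $2+a$ and minorities $1-a$ with $a>0$, the (improper) superpositions $\omega[\proj_1^{(3)},\proj_1^{(3)},f]$ over minorities $f$ cancel the minorities exactly, and symmetrically $\omega[\proj_1^{(3)},f,f]$ over majorities $f$ handles $a<0$; Lemma~\ref{lemma:sumofsuperpos} legitimates the improper intermediate steps. So your anticipated obstacle is real, but the choice of generators you propose for~$\mathcal{C}$ is the wrong one.
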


The proof of Theorem~\ref{thm:class2} can be found in Section~\ref{sec:proof}.

Note that compared to Theorem~\ref{thm:class1} the inequality in case~(3) of 
Theorem~\ref{thm:class2} is not strict as it includes one of the ternary cases.

We remark that Theorem~\ref{thm:class2} holds for \emph{any} weighted clone $W$
with \emph{any} support clone $C$ as long as $W$ contains a positive weighting.

Theorem~\ref{thm:class2} tells us that (i) Pixley operations are not necessary
for tractability, (ii) semiprojections can be separated from the other types of
ternary operations,  and (iii) the only possible interplay between majority and
minority operations, as described in case~(2c) of Theorem~\ref{thm:class2}. 

We now focus on the weighted clones containing one of the weightings described
in Theorem~\ref{thm:class2}.

\paragraph{Case~(1) of Theorem~\ref{thm:class2}}

A weighting described in Theorem~\ref{thm:class2}\,(1) can lead both to
tractable and intractable weighted clones, as the next two examples demonstrate,
but the precise boundary of tractability is currently unknown.

\begin{example}\label{ex:TP}
A binary operation $f:D^2\rightarrow D$ is called conservative if $f(x,y)\in\{x,y\}$ for all $x,y\in D$ and commutative if $f(x,y)=f(y,x)$ for
all $x,y\in D$. Moreover, $f$ is called a \emph{tournament} operation if $f$ is both conservative
and commutative. Let $W$ be a weighted clone such $\supp(W)$ contains a
tournament operation. Then, by a recent result of the
authors~\cite{tz15:icalp}, $W$ is tractable.
\end{example}

\begin{example}\label{ex:f}
Let $D=\{0,1,2,3\}$ and let $f$ be a binary operation defined by 
Table~\ref{table:bin}.
\begin{table}[hbt]
\begin{center}
\begin{tabular}{c|cccc}
\textbf{f} & \textbf{0} & \textbf{1} & \textbf{2} & \textbf{3} \\ \hline
\textbf{0} & 0 & 1 & 0 & 1 \\
\textbf{1} & 0 & 1 & 0 & 1 \\
\textbf{2} & 2 & 3 & 2 & 3 \\
\textbf{3} & 2 & 3 & 2 & 3 
\end{tabular}
\end{center}
\caption{Definition of $f$ from Example~\ref{ex:f}.}
\label{table:bin}
\end{table}
Note that $f$ is an idempotent operation but not a projection. In fact, $f$ is
an example of a \emph{rectangular band}~\cite{McKenzie87:algebras},
which is an idempotent and associative binary operation $f:D^2\to D$ satisfying
$f(x,f(y,z))=f(x,z)$ for all $x,y,z\in D$. Let $W$ be the weighted clone
generated by the weighting $\omega$ defined by
$\omega(\proj^{(2)}_1)=\omega(\proj^{(2)}_2)=-1$ and $\omega(f)=+2$. It is known
that $W$ is intractable~\cite{Jeavons97:closure,Pearson97:survey}.
\end{example}

\paragraph{Case~(2a) of Theorem~\ref{thm:class2}}

A weighting as described in Theorem~\ref{thm:class2}\,(2a) implies tractable
weighted clones, as we will now show.

A weighted relational clone that contains only relations (and thus is a
relational clone) is called \emph{crisp}. A weighted clone $W$ is called crisp
if $\imp(W)$ is a crisp weighted relational clone.

\begin{proposition}\label{prop:majority}

Let $W$ be a weighted clone with a positive ternary weighting $\omega\in W$ such that all
operations $f\in\supp(\omega)$ are majority operations. Then $W$ is
crisp.

\end{proposition}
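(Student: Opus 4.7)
The plan is to show directly that every weighted relation improved by $\omega$ must be unweighted, that is, must take at most one finite value. Since $\omega \in W$ and therefore $\imp(W) \subseteq \imp(\{\omega\})$, this is enough to conclude that $\imp(W)$ is crisp and hence that $W$ is crisp.

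Fix a weighted relation $\phi$ of arity $m$ that is improved by $\omega$, and pick any two tuples $\tup{a}, \tup{b} \in \dom(\phi)$. The key step is to apply the weighted polymorphism inequality \eqref{eq:WPOL} at each of the three cyclic triples
\[
(\tup{a}, \tup{a}, \tup{b}), \qquad (\tup{a}, \tup{b}, \tup{a}), \qquad (\tup{b}, \tup{a}, \tup{a}).
\]
For every $f \in \supp(\omega)$, the defining identities $f(x,x,y) = f(x,y,x) = f(y,x,x) = x$ of a majority operation give, coordinatewise, that $f$ returns $\tup{a}$ on each of these three inputs. The projections $\proj^{(3)}_1, \proj^{(3)}_2, \proj^{(3)}_3$, on the other hand, each return $\tup{b}$ on exactly one of the three triples and $\tup{a}$ on the other two.

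Adding the three resulting inequalities, the total contribution from $\supp(\omega)$ reduces to $3\phi(\tup{a})\sum_{f\in\supp(\omega)}\omega(f)$, while the projection contribution is $\bigl(\sum_{i=1}^{3}\omega(\proj^{(3)}_i)\bigr)\bigl(\phi(\tup{b}) + 2\phi(\tup{a})\bigr)$. Writing $W_p := \sum_{i=1}^{3} \omega(\proj^{(3)}_i)$ and using $\sum_{f} \omega(f) = 0$, a short simplification should collapse the sum to $W_p\bigl(\phi(\tup{b}) - \phi(\tup{a})\bigr) \leq 0$. Positivity of $\omega$ forces $W_p < 0$, so $\phi(\tup{b}) \geq \phi(\tup{a})$. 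Interchanging the roles of $\tup{a}$ and $\tup{b}$ gives the reverse inequality, whence equality. Since $\tup{a}, \tup{b} \in \dom(\phi)$ were arbitrary, $\phi$ is constant on $\dom(\phi)$, and because $\phi(\tup{x}) = \infty$ off $\dom(\phi)$ by definition of $\dom$, the function $\phi$ is an unweighted relation.

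There is no real obstacle; the proof is a direct calculation. The one insight worth flagging is that the three chosen input triples are exactly the cyclic shifts of $(\tup{a},\tup{a},\tup{b})$, so the symmetry of the majority identities makes the contributions from $\supp(\omega)$ collapse to a single common value $\phi(\tup{a})$; the positivity hypothesis on $\omega$ then enters precisely to guarantee that the coefficient $W_p$ is strictly negative, which is what lets one divide through and obtain a genuine inequality on $\phi(\tup{a}) - \phi(\tup{b})$.
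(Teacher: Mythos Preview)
Your proof is correct and the underlying idea---apply~\eqref{eq:WPOL} to inputs on which the majority identities force every $f\in\supp(\omega)$ to return the same tuple---is exactly the paper's. The only difference is in how the projection weights are handled: the paper first invokes Lemma~\ref{lem:perm} to normalise $\omega$ so that each projection carries weight $-1$, after which a single application of~\eqref{eq:WPOL} to $(\tup{y},\tup{x},\tup{x})$ already yields $\phi(\tup{x})\leq\phi(\tup{y})$; you instead sum the three inequalities coming from the cyclic shifts of $(\tup{a},\tup{a},\tup{b})$, which performs the same symmetrisation in place and lets you work directly with $W_p=\sum_i\omega(\proj^{(3)}_i)$. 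Your route is marginally more self-contained (no appeal to Lemma~\ref{lem:perm}); the paper's route has the advantage that, once normalised, the same one-line computation proves the more general near-unanimity version, Proposition~\ref{prop:NU}, of which the majority case is the instance $k=3$.
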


In order to prove Proposition~\ref{prop:majority}, we prove a more general
result. A $k$-ary operation $f:D^k\to D$, where $k\geq 3$, is called a
\emph{near-unanimity} operation if for all $x,y\in D$,
\begin{equation}\label{eq:defNU}
f(y,x,x,\ldots,x)=f(x,y,x,x,\ldots,x)=\cdots=f(x,x,\ldots,x,y)=x\,.
\end{equation}

Note that a ternary near-unanimity operation is a majority operation.

\begin{proposition}\label{prop:NU}
Let $W$ be a weighted clone with a positive weighting $\omega\in W$ such that all
operations $f\in\supp(\omega)$ are near-unanimity operations. Then $W$ is crisp.
\end{proposition}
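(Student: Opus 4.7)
The plan is to show that every weighted relation in $\imp(W)$ takes at most one finite value on its feasibility relation, hence fits Definition~\ref{def:rel} as an (unweighted) relation. Then $\imp(W) \subseteq \rel_D$, i.e., $W$ is crisp.

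Fix $\phi \in \imp(W)$ of arity $m$ and any two tuples $\tup{a}, \tup{b} \in \dom(\phi)$; the goal is $\phi(\tup{a}) = \phi(\tup{b})$. First I would apply Lemma~\ref{lem:perm} to $\omega$ to obtain a positive $k$-ary weighting $\mu \in W$ whose support still consists entirely of near-unanimity operations (a cyclic reshuffle $f^\pi$ of a near-unanimity operation is itself near-unanimity, since the defining identity~(\ref{eq:defNU}) is invariant under permuting arguments) and which satisfies $\mu(\proj^{(k)}_i) = -1$ for every $1 \leq i \leq k$. This symmetrisation removes any case analysis on which projection carries negative weight.

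Next, fix any $j \in \{1,\ldots,k\}$ and evaluate the weighted-polymorphism inequality~(\ref{eq:WPOL}) on the inputs $(\tup{x}_1,\ldots,\tup{x}_k)$ with $\tup{x}_j = \tup{b}$ and $\tup{x}_i = \tup{a}$ for $i \neq j$. The near-unanimity identity~(\ref{eq:defNU}) applied coordinatewise shows that every $f \in \supp(\mu)$ outputs $\tup{a}$ on this input (at each coordinate, either all arguments agree, or a single outlier sits in position $j$, which $f$ absorbs), while $\proj^{(k)}_i$ outputs $\tup{a}$ for $i \neq j$ and $\proj^{(k)}_j$ outputs $\tup{b}$. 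Substituting into~(\ref{eq:WPOL}) and using $\sum_f \mu(f) = 0$ to rewrite $\sum_{f \in \supp(\mu)} \mu(f) = k$, the inequality collapses to $\phi(\tup{a}) - \phi(\tup{b}) \leq 0$. Swapping the roles of $\tup{a}$ and $\tup{b}$ yields the reverse inequality, so $\phi(\tup{a}) = \phi(\tup{b})$.

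Consequently $\phi$ is constant on $\dom(\phi)$ and is a relation in the sense of Definition~\ref{def:rel}, so $\imp(W) \subseteq \rel_D$ and $W$ is crisp. I do not foresee a real obstacle: the key conceptual move is that a near-unanimity operation folds any input containing a single ``outlier'' into the repeated coordinate, isolating the single projection $\proj^{(k)}_j$ as the sole contributor of $\phi(\tup{b})$ in the weighted sum, while Lemma~\ref{lem:perm} merely tidies up the projection weights.
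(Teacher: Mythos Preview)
Your proof is correct and essentially identical to the paper's own argument: both apply Lemma~\ref{lem:perm} to normalise the projection weights to $-1$ (using that near-unanimity is preserved under permutation of arguments), then instantiate~(\ref{eq:WPOL}) with one outlier tuple among $k-1$ copies of the other to obtain $\phi(\tup{a})\leq\phi(\tup{b})$, and swap to conclude equality. The only differences are cosmetic (your choice of an arbitrary index $j$ versus the paper's fixed $j=1$, and slightly more explicit bookkeeping of the total positive weight $k$).
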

\begin{proof}
Let $\omega$ be $k$-ary. Note that if $f$ is a $k$-ary near-unanimity operation
then so is $g(x_1,\ldots,x_k)=f(x_{\pi(1)},\ldots,x_{\pi(k)})$ for any
permutation $\pi$ on $\{1,\ldots,k\}$. Thus, by Lemma~\ref{lem:perm}, we can
assume $\omega$ assigns weight $-1$ to each of the $k$ projections (and still every
$f\in\supp(\omega)$ is a near-unanimity operation).

Let $\phi\in\imp(W)$ be an $m$-ary weighted relation and let
$\tup{x},\tup{y}\in D^m$ be such that $\tup{x},\tup{y}\in\dom(\phi)$. 
Since $\omega\in\wpol(\phi)$, we have, by~\eqref{eq:WPOL} with
$\tup{x}_1=\tup{y}$ and $\tup{x}_i=\tup{x}$ for all $2\leq i\leq k$, and by~\eqref{eq:defNU},
$-\phi(\tup{y})-(k-1)\phi(\tup{x})+k\phi(\tup{x})\leq 0$, which gives $\phi(\tup{x})\leq\phi(\tup{y})$. By swapping
$\tup{x}$ and $\tup{y}$ in~\eqref{eq:WPOL}, we get
$-\phi(\tup{x})-(k-1)\phi(\tup{y})+k\phi(\tup{y})\leq 0$, which gives
$\phi(\tup{y})\leq\phi(\tup{x})$. Together, $\phi(\tup{x})=\phi(\tup{y})$ for all
$\tup{x},\tup{y}\in\dom(\phi)$. 
\end{proof}

Since crisp weighted relational clones with a near-unanimity polymorphism 
are tractable~\cite{Jeavons97:closure}, we get
the following.

\begin{corollary}
A weighted clone containing a positive weighting $\omega$ with all operations in
$\supp(\omega)$ being near-unanimity operations is tractable.
\end{corollary}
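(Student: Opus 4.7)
The plan is to combine Proposition~\ref{prop:NU} with the classical tractability result for relational clones with a near-unanimity polymorphism. Let $W$ be a weighted clone satisfying the hypothesis, and let $\Gamma = \imp(W)$ be the associated weighted relational clone (using Theorem~\ref{thm:wgc}).

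First, I would invoke Proposition~\ref{prop:NU} directly to conclude that $W$ is crisp; by definition of crispness, this means that every $\phi \in \imp(W)$ takes at most one finite value, so (up to an additive constant and scaling, both of which are allowed by Definition~\ref{defn:wrelclone}) $\imp(W)$ can be identified with a relational clone of ordinary relations on $D$.

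Next, since $\omega$ is a weighted polymorphism of every $\phi \in \imp(W)$, each operation $f \in \supp(\omega)$ is an ordinary polymorphism of every relation in $\imp(W)$: this is immediate from Definition~\ref{def:wp}, which requires $\supp(\omega) \subseteq \pol(\imp(W))$. By hypothesis, every such $f$ is a near-unanimity operation, so the (crisp) relational clone underlying $\imp(W)$ admits a near-unanimity polymorphism.

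Finally, I would apply the classical result of Jeavons, Cohen and Cooper~\cite{Jeavons97:closure} stating that any constraint language (relational clone) preserved by a near-unanimity operation is tractable: every instance can be solved by enforcing strong $(k-1)$-consistency, where $k$ is the arity of the near-unanimity operation. Tractability of $\imp(W)$ as a crisp language then coincides with tractability of $W$ by Definition of a tractable weighted clone. The only subtlety to check is that the identification of the crisp $\imp(W)$ with a relational clone of ordinary relations preserves tractability, which is immediate since adding a constant to, or rescaling, a weighted relation does not change the set of optimal assignments.
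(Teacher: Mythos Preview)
Your proposal is correct and follows essentially the same approach as the paper: apply Proposition~\ref{prop:NU} to conclude that $\imp(W)$ is crisp, observe that any $f\in\supp(\omega)$ is a near-unanimity polymorphism of $\imp(W)$, and then invoke the classical tractability result for (crisp) constraint languages with a near-unanimity polymorphism~\cite{Jeavons97:closure}. The paper states this in a single sentence, and your write-up is simply a more detailed unpacking of the same argument; the only minor slip is the attribution of~\cite{Jeavons97:closure} (the authors are Jeavons, Cohen and Gyssens, not Cooper), and your appeal to Definition~\ref{def:wp} should more precisely say that the \emph{support clone} $C$ of $W$ satisfies $C\subseteq\pol(\imp(W))$, from which $\supp(\omega)\subseteq\pol(\imp(W))$ follows.
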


\paragraph{Case~(2b) of Theorem~\ref{thm:class2}}

A weighting as described in Theorem~\ref{thm:class2}\,(2b) also implies tractable
weighted clones, as we will now show.

\begin{proposition}\label{prop:minority}

Let $W$ be a weighted clone with a positive ternary weighting $\omega\in W$ such that all
operations $f\in\supp(\omega)$ are minority operations. Then $W$ is crisp.

\end{proposition}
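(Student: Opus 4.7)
The plan is to mimic closely the proof of Proposition~\ref{prop:NU}, exploiting the defining identities of minority operations the same way that proof exploited the near-unanimity identities.

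First I would apply Lemma~\ref{lem:perm} to $\omega$. Observe that conjugating a minority operation by a cyclic permutation of its three inputs yields again a minority operation (since the identities $f(x,x,y)=f(x,y,x)=f(y,x,x)=y$ are symmetric in the three inputs). Hence the Lemma~\ref{lem:perm}-weighting $\mu$ associated with $\omega$ still has support consisting only of minority operations, assigns $\mu(\proj^{(3)}_i)=-1$ for $i=1,2,3$, and satisfies $\sum_{f\in\supp(\mu)}\mu(f)=3$.

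Next I would pick an arbitrary $\phi\in\imp(W)$ of arity $m$ and any $\tup{x},\tup{y}\in\dom(\phi)$, and plug $\tup{x}_1=\tup{y}$, $\tup{x}_2=\tup{x}_3=\tup{x}$ into~\eqref{eq:WPOL} for $\mu$. The projection terms contribute $-\phi(\tup{y})-\phi(\tup{x})-\phi(\tup{x})$, while every $f\in\supp(\mu)$ satisfies $f(\tup{y},\tup{x},\tup{x})=\tup{y}$ by the minority identity, contributing $3\phi(\tup{y})$ in total. This gives $2\phi(\tup{y})-2\phi(\tup{x})\leq 0$, i.e., $\phi(\tup{y})\leq\phi(\tup{x})$. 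Swapping the roles of $\tup{x}$ and $\tup{y}$ yields the reverse inequality, so $\phi(\tup{x})=\phi(\tup{y})$. Thus $\phi$ takes at most one finite value on $\dom(\phi)$, which (after subtracting a rational constant, an operation under which weighted relational clones are closed) makes $\phi$ an unweighted relation. Hence $\imp(W)\subseteq\rel_D$, so $W$ is crisp.

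There is essentially no obstacle here; the only thing that requires a moment's thought is verifying that cyclic conjugation preserves minority operations so that Lemma~\ref{lem:perm} is applicable without changing the type of $\supp(\omega)$. Everything else is a direct one-line computation using the three-variable minority identity in exactly the same pattern as~\eqref{eq:WPOL} was used in the proof of Proposition~\ref{prop:NU}.
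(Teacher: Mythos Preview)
Your proposal is correct and follows essentially the same argument as the paper's proof: apply Lemma~\ref{lem:perm} to normalise the projection weights to $-1$ (noting that permuting inputs preserves the minority property), then instantiate~\eqref{eq:WPOL} with two equal and one distinct tuple so that the minority identity forces the positive-weight terms to coincide, yielding both inequalities. The only cosmetic difference is your choice of $(\tup{y},\tup{x},\tup{x})$ versus the paper's $(\tup{x},\tup{y},\tup{y})$, which is immaterial.
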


\begin{proof}
Note that if $f$ is a minority operation
then so is $g(x_1,x_2,x_3)=f(x_{\pi(1)},x_{\pi(2)},x_{\pi(3)})$ for any
permutation $\pi$ on $\{1,2,3\}$. Thus, by Lemma~\ref{lem:perm}, we can
assume $\omega$ assigns weight $-1$ to each of the three projections (and still every
$f\in\supp(\omega)$ is a minority operation).

Let $\phi\in\imp(W)$ be an $m$-ary weighted relation and let
$\tup{x},\tup{y}\in D^m$ be such that $\tup{x},\tup{y}\in\dom(\phi)$. 
Since $\omega\in\wpol(\phi)$, we have, by~\eqref{eq:WPOL} with
$\tup{x}_1=\tup{x}$ and $\tup{x}_2=\tup{x}_3=\tup{y}$,
$-\phi(\tup{x})-2\phi(\tup{y})+3\phi(\tup{x})\leq 0$, which gives $\phi(\tup{x})\leq\phi(\tup{y})$. By swapping
$\tup{x}$ and $\tup{y}$ in~\eqref{eq:WPOL}, we get
$-\phi(\tup{y})-2\phi(\tup{x})+3\phi(\tup{y})\leq 0$, which gives
$\phi(\tup{y})\leq\phi(\tup{x})$. Together, $\phi(\tup{x})=\phi(\tup{y})$ for all
$\tup{x},\tup{y}\in\dom(\phi)$. 
\end{proof}

Since crisp weighted relational clones with a minority polymorphism are
tractable~\cite{Jeavons97:closure}, we get the following.

\begin{corollary}
A weighted clone containing a positive weighting $\omega$ with all operations in
$\supp(\omega)$ being minority operations is tractable.
\end{corollary}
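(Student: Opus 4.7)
The plan is to chain Proposition~\ref{prop:minority} with the classical tractability theorem for relational clones admitting a minority polymorphism. Both ingredients are already at hand, so there is no real technical obstacle; the corollary should come out as essentially a one-line consequence, in complete parallel with the majority case just above.

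First, I would invoke Proposition~\ref{prop:minority} on the given weighted clone $W$: its hypothesis matches that of the corollary verbatim, and its conclusion tells us that $W$ is crisp, so every weighted relation in $\imp(W)$ is in fact an (unweighted) relation. In particular, the tractability of $W$ is, by definition, the tractability of the ordinary relational clone $\imp(W)$. Next, for any $f \in \supp(\omega)$, Definition~\ref{def:wp} forces $f \in \pol(\phi)$ for every $\phi \in \imp(W)$, since the support clone $C$ of $\omega$ is contained in $\pol(\phi)$; and by hypothesis each such $f$ is a minority operation. Hence $\imp(W)$ is a relational clone admitting a minority polymorphism.

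Finally, I would appeal to the classical result of Jeavons, Cohen, and Gyssens cited as~\cite{Jeavons97:closure} (and explicitly recalled immediately before the corollary): any relational clone possessing a minority polymorphism is tractable, because every finite subset gives rise to a CSP that reduces to solving a system of linear equations over an abelian group and is therefore polynomial-time solvable by Gaussian elimination. Applying this to $\imp(W)$ yields tractability of $\imp(W)$, and hence, by the definition of a tractable weighted clone, of $W$ itself.
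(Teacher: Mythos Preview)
Your proposal is correct and follows essentially the same approach as the paper: the paper simply states, immediately before the corollary, that ``crisp weighted relational clones with a minority polymorphism are tractable~\cite{Jeavons97:closure}'', so the corollary is obtained by combining Proposition~\ref{prop:minority} with this classical result, exactly as you do. Your argument additionally spells out why a minority operation in $\supp(\omega)$ is a polymorphism of every $\phi\in\imp(W)$, which the paper leaves implicit.
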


\paragraph{Case~(2c) of Theorem~\ref{thm:class2}}

In a recent paper the authors have shown~\cite{tz15:icalp} that any weighting
described in Theorem~\ref{thm:class2}\,(2c) implies tractability. This is a
corollary of the following result.

\begin{theorem}[\hspace*{-0.3em}\cite{tz15:icalp}]
Let $W$ be a weighted clone. If there is a weighting $\omega\in W$ such that
$\supp(\omega)$ contains a majority operation then $W$ is tractable.
\end{theorem}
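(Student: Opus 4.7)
The plan is to show that a constant level of the Sherali--Adams hierarchy (or possibly the basic LP relaxation) is tight for every finite sublanguage of $\imp(W)$, thereby yielding a polynomial-time algorithm. The starting observation is that $m\in\supp(\omega)\subseteq\pol(\imp(W))$, so the majority operation $m$ is a polymorphism of every feasibility relation $\feas(\phi)$ with $\phi\in\imp(W)$. By the classical result of Jeavons, Cohen and Gyssens~\cite{Jeavons97:closure}, this already makes the underlying crisp CSP tractable via $(2,3)$-consistency; the task is therefore to lift this to the valued setting.

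To do so, I would first invoke Lemma~\ref{lem:perm} to replace $\omega$ by a symmetrized weighting $\mu\in W$ which assigns weight $-1$ to each ternary projection, is invariant under cyclic permutations of its arguments, and whose support still contains a majority operation (since $m^\pi$ is a majority for every $\pi\in\perm_3$). This symmetric form is convenient because the weighted-polymorphism inequality~\eqref{eq:WPOL} for $\mu$ can then be interpreted as a Jensen-type inequality that is symmetric in its three input tuples.

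Next, given a VCSP instance $I$ over a finite subset of $\imp(W)$, I would take an optimal Sherali--Adams solution at a low constant level, which provides consistent local distributions over triples of variable assignments. The rounding step uses the majority $m$: given three partial assignments $\tup{y}_1,\tup{y}_2,\tup{y}_3$ sampled from the local marginals on every scope, the tuple $m(\tup{y}_1,\tup{y}_2,\tup{y}_3)$ is again a feasible integral assignment on that scope (since $m\in\pol(\feas(\phi))$), and applying~\eqref{eq:WPOL} constraint-by-constraint should bound its total cost by a convex combination of the costs of the three sampled tuples, hence by the relaxation value. Combining this local rounding with $(2,3)$-consistency (which is enforced by the SA level) yields a globally consistent integral solution.

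The main obstacle will be controlling the contribution of the operations in $\supp(\mu)$ other than the majorities: these are arbitrary ternary operations in $\pol(\imp(W))$, and their contribution to~\eqref{eq:WPOL} must be absorbed. The key step is to exploit the fact that $(2,3)$-consistency forces all three sampled tuples to agree on their pairwise projections modulo the local marginals, so that non-majority terms reduce to telescoping contributions that cancel in expectation. Making this precise --- in particular, choosing the right SA level and matching it with the support structure of $\mu$ --- is the crux of the argument, and the plan is to follow the template developed in~\cite{tz12:focs,ktz15:sicomp,tz15:icalp} for LP/SA-based VCSP tractability.
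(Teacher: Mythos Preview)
The paper does not actually prove this theorem: it is quoted verbatim from~\cite{tz15:icalp} and used as a black box to handle case~(2c) of Theorem~\ref{thm:class2}. There is therefore no in-paper proof to compare your proposal against.

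That said, your proposal has a genuine gap at the rounding step. The weighted-polymorphism inequality~\eqref{eq:WPOL} for the symmetrised weighting $\mu$ reads
\[
\sum_{f\in\supp(\mu)}\mu(f)\,\phi\bigl(f(\tup{y}_1,\tup{y}_2,\tup{y}_3)\bigr)\ \le\ \phi(\tup{y}_1)+\phi(\tup{y}_2)+\phi(\tup{y}_3),
\]
which bounds a \emph{convex combination over all of $\supp(\mu)$}, not the single value $\phi\bigl(m(\tup{y}_1,\tup{y}_2,\tup{y}_3)\bigr)$. Your rounding rule commits to the output of $m$, yet the inequality gives no control over that term alone; the other operations in $\supp(\mu)$ could push cost into the majority term rather than out of it. Your proposed fix --- that $(2,3)$-consistency makes the three sampled tuples agree pairwise so that the non-majority contributions ``telescope in expectation'' --- does not hold: samples from SA marginals are random tuples, not tuples that agree on pairs, and consistency of the \emph{marginals} does not force any per-sample cancellation. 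Without a mechanism to neutralise the non-majority part of $\supp(\mu)$, the argument does not close.

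For orientation, the proof in~\cite{tz15:icalp} does not proceed by a direct majority-based rounding. It characterises exactness of a fixed Sherali--Adams level in terms of the existence of weighted polymorphisms of a prescribed shape, and then shows algebraically that a majority operation in $\supp(\omega)$ forces $W$ to contain weightings of that shape. The LP/SA framework you invoke is the right one, but the work is on the algebraic side (producing the required weightings inside $W$), not on designing a rounding scheme around $m$ itself.
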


Previously, only a special type of the weightings described in
Theorem~\ref{thm:class2}\,(2c) has been known to imply tractability.

\begin{example}
A $k$-ary weighting $\omega$ is a \emph{multimorphism} if
$\omega(f)\in\mathbb{N}$ for all $f\in\supp(\omega)$ and
$\omega(\proj^{(k)}_i)=-1$ for all $1\leq i\leq
k$~\cite{Cohen06:complexitysoft}. It has been shown that if a weighted clone
$W$ contains a weighting $\omega$ described in Theorem~\ref{thm:class2}\,(2c)
such that $\omega$ is a multimorphism then $W$ is tractable~\cite{kz13:jacm}.
\end{example}

\paragraph{Case~(3) of Theorem~\ref{thm:class2}}

We show that the weightings described in Theorem~\ref{thm:class2}\,(3)
\emph{alone} are not
sufficient for tractability. As in case~(1),
the precise boundary of tractability is currently unknown.

\begin{example}
Let $D$ be a fixed set with $|D|>2$. 
Fix two distinct labels from $D$, say $0,1\in D$.
Let $\phi$ be the following ternary weighted relation: $\phi(x,y,z)=\infty$ if
$\{x,y,z\}=\{0\}$, or $\{x,y,z\}=\{1\}$, or $\{x,y,z\}\neq\{0,1\}$; and
$\phi(x,y,z)=0$ otherwise. The weighted relation $\phi$ corresponds to the
\textsc{Not-All-Equal Satisfiability} problem~\cite{Garey79:intractability},
which is NP-hard~\cite{Schaefer78:complexity}. It is easy to show that every
semiprojection on $D$ is a polymorphism of $\phi$.
Take a $k$-ary semiprojection $f$ for some $k\geq 3$ and
$\tup{x}_1,\ldots,\tup{x}_k\in\feas(\phi)$. From the definition of $\phi$, 
we have $\tup{x}_i\in\{0,1\}^3$ for every
$1\leq i\leq k$. Since there are at most two distinct labels in each 
coordinate, $f(\tup{x}_1,\ldots,\tup{x}_k)$ reduces to a projection (from the
definition of semiprojections) and thus $f$ is a polymorphism of $\phi$ as
$f(\tup{x}_1,\ldots,\tup{x}_k)=\tup{x}_i$ for some $1\leq i\leq k$.

Let $C$ be the clone of operations on $D$ generated by all semiprojections on
$D$. Let $W=\mathbf{W}_C$ be the weighted
clone containing all possible weightings of $C$. In particular, $W$ contains all
possible weightings whose support contains only semiprojections. Since
$C\subseteq\pol(\phi)$ and $\phi$ is a relation we have that $\phi\in\imp(W)$.
Consequently, $W$ is intractable.
\end{example}

\paragraph{Finite-Valued Weighted Clones}

Recall that valued constraint languages capture both decision and optimisation
problems. Clones, which capture crisp valued constraint languages and thus
purely decision problems, have been studied extensively in universal
algebra~\cite{Szendrei86:clones,Hobby88:structure}. We now focus on an important
special type of weighted clones that correspond to valued constraint languages
that capture purely optimisation problems. Such valued constraint languages are
called \emph{finite-valued} as they only contain finite-valued weighted
relations.

Weighted clones corresponding to finite-valued constraint languages (together
with the binary equality relation $\phi_=$) are those 
with support clone $\mathcal{O}_D$. To see this,
we denote, for a clone $C$, by $\inv(C)$ the relational clone that consists of relations $R$ with $f\in\pol(R)$ for every $f\in C$.
Then, it is well known that
$\inv(\mathcal{O}_D)=\relclone(\{\phi_=\})$ and observe that
$\feas(\imp(W))\subseteq\inv(\mathcal{O}_D)$ for any weighted clone $W$ with
support clone $\mathcal{O}_D$.\footnote{More generally, we have
$\feas(\imp(W))=\inv(C)$ for any nonempty weighted clone $W$ with support clone
$C$. On the one hand, if $\phi\in\imp(W)$ then, by Definition~\ref{def:wp},
$C\subseteq\pol(\phi)$, which implies
$\feas(\phi)\in\inv(\pol(\phi))\subseteq\inv(C)$. On the other hand, if
$R\in\inv(C)$ then $C\subseteq\pol(R)$. Since $R$ satisfies~\eqref{eq:WPOL} we
have $R\in\imp(W)$.}

However, as we have limited our scope to rigid cores (which, by
Theorem~\ref{thm:rigid}, does not change tractability), we will define a
weighted clone $W$ to be finite-valued if its support clone is equal to
$\mathbf{I}_D$, the clone of all \emph{idempotent} operations on $D$. 

\begin{definition} 
A weighted clone $W$ on $D$  is called \emph{finite-valued} if the support clone of $W$ is $\mathbf{I}_D$. 
\end{definition}

For any $d\in D$, the unary constant relation $\phi_d$ is defined by
$\phi_d(x)=0$ if $x=d$ and $\phi_d(x)=\infty$ otherwise. Let
$\mathcal{R}=\relclone(\{\phi_=\}\cup\cup_{d\in D}\{\phi_d\})$. It is known that
$\inv(\mathbf{I}_D)=\mathcal{R}$~\cite{Bulatov05:classifying}.

The weighted relational clones corresponding to finite-valued weighted clones
are those that are subsets of the weighted relational clone generated by
$\mathcal{R}$ and finite-valued weighted relations.

We already know that weighted clones containing any of the weightings described
in Theorem~\ref{thm:class2}\,(2a-c) are tractable. In fact, in the finite-valued
case, the corresponding weighted relational clones are rather trivial as we will
now show.

Let $W$ be a finite-valued weighted clone on $D$. Then for any weighted
relation $\phi\in\imp(W)$ we have $\feas(\phi)\in\mathcal{R}$.

If $W$ contains a weighting described in Theorem~\ref{thm:class2}\,(2a) then, by
Proposition~\ref{prop:majority}, $\imp(W)$ is crisp and thus every
$\phi\in\imp(W)$ can be written as the addition of a rational constant to a
weighted relation in $\mathcal{R}$.
Hence $\imp(W)$ is tractable. Similarly, if $W$ contains a weighting
described in Theorem~\ref{thm:class2}\,(2b) then, by
Proposition~\ref{prop:minority}, $\imp(W)$ is crisp and thus every
$\phi\in\imp(W)$ can be written as the addition of a rational constant to a
weighted relation in $\mathcal{R}$.
Hence $\imp(W)$ is tractable.

The next result shows that a weighting described in
Theorem~\ref{thm:class2}\,(2c) also suffices for tractability in the
finite-valued case.

\begin{proposition}
Let $W$ be a finite-valued weighted clone. If $W$ contains a positive weighting
described in Theorem~\ref{thm:class2}\,(2c) then every weighted relation 
$\phi\in\imp(W)$ can be expressed as a sum of unary weighted relations and the
binary equality relation $\phi_=$.
\end{proposition}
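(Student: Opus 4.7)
The plan is to exploit a coordinatewise analysis of the weighted polymorphism inequality~\eqref{eq:WPOL} for the specific weighting in case~(2c), followed by a standard additive-decomposition argument.

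First, I would scale $\omega$ so that $\omega(\proj^{(3)}_i) = -1$ for $i=1,2,3$, with the majority operations in $\supp(\omega)$ carrying total weight $2$ and the minorities total weight $1$. Now fix $\phi \in \imp(W)$ of arity $m$. Since the support clone of $W$ is $\mathbf{I}_D$, the preceding discussion gives $\feas(\phi) \in \inv(\mathbf{I}_D) = \mathcal{R}$, so $\feas(\phi)$ is cut out by a conjunction of equalities between coordinates and fixings of coordinates to constant labels; both kinds of constraint are already expressible using copies of $\phi_=$ and unary $\{0,\infty\}$-valued weighted relations. Hence it suffices to show that the remaining finite-valued function, defined on $D^{m'}$ after identifying equal coordinates and deleting those forced to constants, is a sum of unary weighted relations; I will continue to write $\phi$ for this reduced function, which satisfies $\feas(\phi) = D^{m'}$.

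The key step is to apply~\eqref{eq:WPOL} twice to a carefully chosen pair of triples. Fix coordinates $i \neq j$ and labels $a_1, \ldots, a_{m'}, b, c \in D$; set $\tup{x} = (a_1,\ldots,a_{m'})$, let $\tup{y}$ be $\tup{x}$ with its $i$-th entry changed to $b$, let $\tup{z}$ be $\tup{x}$ with its $j$-th entry changed to $c$, and let $\tup{w}$ be $\tup{x}$ with both changes applied. Coordinate by coordinate, the triple $(\tup{x},\tup{y},\tup{z})$ produces $(a_i,b,a_i)$ at coordinate $i$, $(a_j,a_j,c)$ at coordinate $j$, and constants elsewhere; so every majority operation evaluates to $\tup{x}$ and every minority operation evaluates to $\tup{w}$, and~\eqref{eq:WPOL} collapses to
\begin{equation*}
-\phi(\tup{x}) - \phi(\tup{y}) - \phi(\tup{z}) + 2\phi(\tup{x}) + \phi(\tup{w}) \leq 0.
\end{equation*}
Repeating the bookkeeping with the triple $(\tup{y},\tup{x},\tup{w})$ produces $(b,a_i,b)$ at coordinate $i$ and $(a_j,a_j,c)$ at coordinate $j$, so majorities now evaluate to $\tup{y}$ and minorities to $\tup{z}$, and~\eqref{eq:WPOL} yields the reverse inequality. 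Together they give the rectangle equality $\phi(\tup{x}) + \phi(\tup{w}) = \phi(\tup{y}) + \phi(\tup{z})$ for every pair of coordinates and every choice of labels.

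From here the conclusion is classical: additive separability across every pair of coordinates is equivalent to $\phi$ being a sum of unary functions. Fixing a base point $\tup{a}^0 \in D^{m'}$ and putting $u_1(x) = \phi(x,a_2^0,\ldots,a_{m'}^0)$ and, for $i \geq 2$, $u_i(x) = \phi(a_1^0,\ldots,a_{i-1}^0,x,a_{i+1}^0,\ldots,a_{m'}^0) - \phi(\tup{a}^0)$, a short induction on the number of coordinates where $\tup{x}$ differs from $\tup{a}^0$ yields $\phi(\tup{x}) = \sum_{i=1}^{m'} u_i(\tup{x}[i])$. Combined with the equalities and unary $\{0,\infty\}$-valued constraints that encoded the original $\feas(\phi)$, this produces the claimed decomposition. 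The main obstacle is the second coordinatewise analysis: one must verify that the shuffle $(\tup{x},\tup{y},\tup{z}) \mapsto (\tup{y},\tup{x},\tup{w})$ really does send every majority to $\tup{y}$ and every minority to $\tup{z}$, since this is exactly where the balance of weight $2$ on the majorities against weight $1$ on the minorities in case~(2c) is used to turn a one-sided inequality into the rectangle equality.
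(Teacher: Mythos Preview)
Your proposal is correct and follows essentially the same strategy as the paper: normalize the weighting, reduce to the case of trivial feasibility, apply the weighted-polymorphism inequality~\eqref{eq:WPOL} twice with well-chosen triples to obtain an equality witnessing additive separability, and conclude. One small looseness: you cannot simply ``scale'' $\omega$ to force $\omega(\proj^{(3)}_i)=-1$ for all $i$, since the three projection weights need not be equal a priori; you should first symmetrize over cyclic permutations (Lemma~\ref{lem:perm}), which preserves the $2{:}1$ majority/minority split and equalizes the projection weights, and then scale.

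The execution differs slightly from the paper. The paper establishes the one-coordinate substitution identity $\phi(\tup{x})+\phi(\tup{y})=\phi(\tup{x}[i\leftarrow\tup{y}[i]])+\phi(\tup{y}[i\leftarrow\tup{x}[i]])$ for arbitrary $\tup{x},\tup{y}$ (citing~\cite[Lemma~6.23]{Cohen06:complexitysoft}) using the triples $(\tup{x},\tup{x}[i\leftarrow b],\tup{y}[i\leftarrow a])$ and $(\tup{x},\tup{y},\tup{y}[i\leftarrow a])$, whereas you prove the two-coordinate rectangle identity and then build the unary summands explicitly by induction. Both characterizations of separability are standard and the two arguments are equally direct.
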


\begin{proof}
By Lemma~\ref{lem:perm}, we can assume the weighting assigns weight $-1$ to each
of the three projections and still is as described in
Theorem~\ref{thm:class2}\,(2c).

An $m$-ary relation $R$ on $D$ is called trivial if $R=D^m$. First we show than
any relation $R\in\mathcal{R}$ can be expressed as a sum of unary relations,
trivial relations, and $\phi_=$. The claim holds true for the generators of
$\mathcal{R}$, that is, for $\phi_=$ and $\phi_d$ for all $d\in D$. Next, if
$R=R_1\wedge R_2$ and the claim holds true for both $R_1$ and $R_2$ the it also
holds true for $R$. Finally, let $R=\exists x R'$ and assume that $R'$ satisfies
the claim. If $x$ appears in some $\phi_=$ in $R'$, say $\phi_=(x,x')$, then we
can replace all occurrences of $x$ by $x'$. Otherwise, $x$ appears only in
constant and trivial relations in $R'$. If the conjunction of the unary
relations that $x$ appears in is empty then the claim holds trivially.
Otherwise, we can replace $x$ by any other variable.

Consequently, for any $\phi\in\imp(W)$, $\feas(\phi)$ can be written as a sum of
unary relations, trivial relations, and $\phi_=$. Observe that any $\phi$ with
$\feas(\phi)=\phi_=$ can be written as a sum of $\phi_=$ and the unary weighted
relation $\phi'(x)=\phi(x,x)$. Thus, it remains to show that any
$\phi\in\imp(W)$ with $\feas(\phi)$ being a trivial relation can be written as a
sum of unary weighted relations.

For any $m$-tuple $\tup{x}\in D^m$, we will write $\tup{x}[i\leftarrow d]$ to
denote the tuple with $d\in D$ {substituted} at position $i$. In other words,
$\tup{x}[i\leftarrow d]$ is the $m$-tuple identical to $\tup{x}$ except
(possibly) at position $i$, where it is equal to $d$.

We will use~\cite[Lemma~6.23]{Cohen06:complexitysoft} which says that a weighted
relation $\phi:D^m\to\Q$ can be expressed as a sum of unary weighted relations if and
only if, for all $\tup{x},\tup{y}\in D^m$ and all $1\leq i\leq m$, we have
\begin{equation}
\phi(\tup{x})+\phi(\tup{y}) = \phi(\tup{x}[i\leftarrow \tup{y}[i]]) +
\phi(\tup{y}[i\leftarrow \tup{x}[i]])\,.
\end{equation}

Take any $\tup{x},\tup{y}\in D^m$ and $1\leq i\leq m$. 
Let $a=\tup{x}[i]$ and $b=\tup{y}[i]$. Now consider the tuples $\tup{x}$,
$\tup{x}[i\leftarrow b]$, and $\tup{y}[i\leftarrow a]$. 
By applying the
weighting from the statement of the proposition as in~\eqref{eq:WPOL}, we get
$\phi(\tup{x})+\phi(\tup{x}[i\leftarrow b])+\phi(\tup{y}[i\leftarrow a]) \geq
2\phi(\tup{x})+\phi(\tup{y})$ and thus $\phi(\tup{x}[i\leftarrow
b])+\phi(\tup{y}[i\leftarrow a]) \geq \phi(\tup{x})+\phi(\tup{y})$.
Now consider the tuples $\tup{x}$, $\tup{y}$, and $\tup{y}[i\leftarrow a]$. By
applying the weighting from the statement of the proposition as
in~\eqref{eq:WPOL}, we get
$\phi(\tup{x})+\phi(\tup{y})+\phi(\tup{y}[i\leftarrow a]) \geq 2\phi(\tup{y}[i\leftarrow a]) + \phi(\tup{x}[i\leftarrow b])$ and thus
$\phi(\tup{x})+\phi(\tup{y}) \geq \phi(\tup{y}[i\leftarrow a]) + \phi(\tup{x}[i\leftarrow b])$.
\end{proof}

\begin{corollary}
A finite-valued weighted clone containing a positive weighting $\omega$ described
in~Theorem~\ref{thm:class2}\,(2c) is tractable.
\end{corollary}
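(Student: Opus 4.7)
The plan is to leverage the preceding Proposition to reduce any instance in $\VCSP(\imp(W))$ to an instance whose constraints are only unary weighted relations and the binary equality relation $\phi_=$, and then solve the reduced instance by a trivial algorithm based on the equivalence classes induced by equality.

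First, by tractability of a weighted clone being defined in terms of tractability of $\imp(W)$, I need to show that for every finite $\Gamma' \subseteq \imp(W)$ the problem $\VCSP(\Gamma')$ is solvable in polynomial time. Given an instance $I \in \VCSP(\Gamma')$ with objective function $I(x_1,\ldots,x_n) = \sum_{i=1}^q \phi_i(\tup{x}_i)$, I would invoke the Proposition on each $\phi_i \in \Gamma'$ to obtain an expression of $\phi_i$ as a sum of unary weighted relations and instances of $\phi_=$. Since $\Gamma'$ is finite and the domain $D$ is fixed, all these decompositions can be precomputed in constant time (for example, by solving the linear system coming from~\cite[Lemma~6.23]{Cohen06:complexitysoft} combined with the equalities required to encode $\feas(\phi_i)$). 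Substituting these decompositions into $I$ yields in polynomial time an equivalent instance $I'$ whose valued constraints are either unary or instances of $\phi_=$.

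Next, I would compute the partition of the variables of $I'$ into equivalence classes induced by the transitive closure of the equality constraints (for instance with a union-find data structure, in near-linear time). In any finite-cost assignment, all variables within a class must receive the same label. I would then, for each equivalence class $V_j$, form the single unary weighted relation $\psi_j : D \to \qq$ obtained by summing the unary weighted relations attached to the variables in $V_j$, and minimise $\psi_j$ over $D$ independently. Taking the union of these pointwise optimal choices yields an optimal assignment to $I'$, and hence to $I$, in total time polynomial in $|I|$. This establishes tractability.

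The main obstacle is essentially packaged inside the preceding Proposition; once that decomposition is in hand the reduction is almost mechanical. The only point one has to be slightly careful about is that the Proposition is stated for weighted relations in $\imp(W)$, so one must observe that it applies uniformly to every $\phi \in \Gamma'$ and that the finitely many decompositions needed can be treated as constants in the input size. No further algorithmic insight beyond the Proposition is required.
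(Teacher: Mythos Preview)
Your proposal is correct and is essentially the approach the paper (implicitly) takes: the paper states this corollary without proof, treating tractability as an immediate consequence of the preceding Proposition, and your argument is precisely the natural elaboration of that consequence --- decompose each constraint via the Proposition, then solve the resulting instance of unary weighted relations plus $\phi_=$ by collapsing equality classes and minimising independently.

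One minor remark: the word ``expressed'' in the Proposition refers to expressibility in the sense of Definition~\ref{def:expres}, which may introduce auxiliary variables. Your reduction still goes through unchanged in that case (the auxiliary variables simply become additional variables in $I'$, which is still over $\{\text{unary},\phi_=\}$), but it is worth being explicit about this when you write it up.
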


The only remaining finite-valued weighted clones contain a weighting that is
either as described in Theorem~\ref{thm:class2}\,(1) or as described in
Theorem~\ref{thm:class2}\,(3). We have seen an example of a (tractable) weighted
clone with a weighting as described in Theorem~\ref{thm:class2}\,(1) in
Example~\ref{ex:TP}. 

We now give an example of an \emph{intractable finite-valued} weighted clone
with a weighting as described in Theorem~\ref{thm:class2}\,(1). (We note that
the intractability of the weighted clone $W$ from Example~\ref{ex:f}, which
contains a weighting as described in Theorem~\ref{thm:class2}\,(1), relies on
the fact that $W$ is not finite-valued and thus is not immediately applicable
here.)

\begin{example}
Let $D=\{0,1,2\}$.
Recall from Example~\ref{ex:TP} that a binary operation $f:D^2\to D$ is \emph{conservative} if $f(x,y)\in\{x,y\}$ for
all $x,y\in D$. 
For any conservative binary operation $f:D^2\to D$ and any 2-element subdomain
$\{a,b\}\subseteq D$, the restriction $\Crestrict{f}{\{a,b\}}$ of $f$ onto $\{a,b\}$
behaves either as $\proj^{(2)}_1$,
$\proj^{(2)}_2$, $\min$, or $\max$, where $\min$ and $\max$ are the two operations
that return the smaller (larger) of its two arguments with respect to the usual order $0<1<2$, respectively.
Consider the operations in Table~\ref{table:w} described by their behaviour on
the various 2-element subdomains.

\begin{table}[htb]
\begin{center}
\begin{tabular}{lccc|l}
$\mathbf{f}$ & \{\textbf{0},\textbf{1}\} & \{\textbf{0},\textbf{2}\} & \{\textbf{1},\textbf{2}\} & $\mathbf{\omega(f)}$\\ \hline
$f_1$ & $\proj^{(2)}_1$ & $\proj^{(2)}_1$ & $\proj^{(2)}_1$ & -0.5 \\
$f_2$ & $\proj^{(2)}_2$ & $\proj^{(2)}_2$ & $\proj^{(2)}_2$ & -0.5 \\
$f_3$ & $\proj^{(2)}_1$ & $\min$ & $\proj^{(2)}_1$ & 0.5 \\
$f_4$ & $\proj^{(2)}_2$ & $\min$ & $\proj^{(2)}_2$ & 0.5 \\
\end{tabular}
\end{center}
\caption{Definition of $\omega$}\label{table:w}
\end{table}

Note that $f_1=\proj^{(2)}_1$ and $f_{2}=\proj^{(2)}_2$.
The weighting $\omega$ is defined by the last column of
Table~\ref{table:w}.
Note that $\omega$ is not commutative.
It can be checked that $\omega$ is a weighted polymorphism
of the finite-valued weighted relation $\phi : \{0,1,2\} \to \Q$ defined in Table~\ref{table:phi}.

\begin{table}[hbt]
\begin{center}
\begin{tabular}{c|ccc}
$\phi$ & \textbf{0} & \textbf{1} & \textbf{2} \\ \hline
\textbf{0} & 1 & 0 & 1 \\
\textbf{1} & 0 & 1 & 1 \\
\textbf{2} & 1 & 1 & 1 
\end{tabular}
\end{center}
\caption{Definition of $\phi$}\label{table:phi}
\end{table}

Now since $\argmin \phi=\{(0,1),(1,0)\}$, we have that $\phi$ satisfies the (MC)
condition~\cite{tz13:stoc} and thus can be used to reduce from
Max-Cut~\cite{Cohen06:complexitysoft}. Thus, $W=\wclone(\{\omega\})$ is
intractable.
\end{example}

Thus weightings described in Theorem~\ref{thm:class2}\,(1) can lead to both
tractable and intractable finite-valued weighted clones. 
The authors have recently shown that with respect to tractability of
finite-valued constraint languages the necessary and sufficient condition is
having a binary weighting $\omega$ that assigns positive weight to idempotent
\emph{commutative} operations only; that is, for every $f\in\supp(\omega)$ we
have $f(x,y)=f(y,x)$ for all $x,y\in D$~\cite{tz13:stoc}.\footnote{The result
from~\cite{tz13:stoc} extends from finite-valued constraint languages to
finite-valued weighted relational clones as adding the binary equality relation
and unary constant relations does not affect tractability in the presence of a
binary commutative weighted polymorphism.}
However, the precise interplay of case~(1) and case~(3) of
Theorem~\ref{thm:class2} is currently unknown.

\paragraph{Minimal Weighted Clones}

Any weighted relational clone $\Gamma\subseteq\wrel_D$ with
$\wrelclone(\Gamma)=\wrel_D$ is NP-hard. 
A weighted relational clone on $D$ is called \emph{maximal} if it is as large as
possible but $\wrelclone(\Gamma)\neq\wrel_D$.\footnote{A (tractable) valued
constraint language $\Gamma$ is called maximal in~\cite{Cohen06:complexitysoft}
if for any $\phi\not\in\Gamma$, $\Gamma\cup\{\phi\}$ is intractable. We require
$\wrelclone(\Gamma\cup\{\phi\})=\wrel_D$, which implies the intractability of
$\Gamma\cup\{\phi\}$, thus borrowing the concept of maximality
from~\cite{Bulatov01:complexity,Bulatov04:graph,Jonsson08:siam,Bodirsky09:tcs}
and extending it from relational clones to weighted relational clones.} 

\begin{definition}\label{def:maximal}
A weighted relational clone $\Gamma\subseteq\wrel_D$ is called \emph{maximal}
if $\wrelclone(\Gamma)\neq\wrel_D$ but for any $\phi\not\in\Gamma$ we have
$\wrelclone(\Gamma\cup\{\phi\})=\wrel_D$. 
\end{definition}

It follows that a valued constraint language $\Gamma$ is maximal if and only if
the weighted relational clone $\wrelclone(\Gamma)$ is maximal.

As a special case of Definition~\ref{def:maximal}, we get that a relational clone
$\Gamma$ is maximal if $\Gamma\neq\rel_D$ but for any $R\in\rel_D$ we have
$\relclone(\Gamma\cup\{R\})=\rel_D$.

A weighted clone is called \emph{minimal} if it is not zero-valued but the only
weighted clone properly included in it is the zero-valued weighted clone. 

\begin{definition}
A weighted clone $W$ with support clone $C$ is called \emph{minimal} if
$W\neq\wops_C^0$ and every positive weighting $\omega\in W$ satisfies
$\wclone(\omega)=W$.
\end{definition}

Maximal weighted relational clones correspond, via the Galois correspondence
given in Theorem~\ref{thm:wgc}, to minimal weighted clones. 

We will be interested in maximal \emph{tractable} weighted relational clones and
thus minimal tractable weighted clones.
Maximal crisp weighted relational clones have been classified with respect to
tractability in~\cite{Bulatov01:complexity,Bulatov04:graph}.
We now show that that there are no tractable maximal non-crisp weighted
relational clones.

\begin{theorem}
All maximal non-crisp weighted relational clones are intractable.
\end{theorem}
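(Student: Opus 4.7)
The plan is to introduce the \emph{feasibility saturation}
\[
\Gamma^{+}\;=\;\{\phi\in\wrel_D : \feas(\phi)\in\feas(\Gamma)\}
\]
and argue that it is a weighted relational clone with $\Gamma\subseteq\Gamma^{+}\subseteq\wrel_D$, so that the maximality of $\Gamma$ collapses the sandwich to $\Gamma^{+}\in\{\Gamma,\wrel_D\}$. First I would verify the closure properties of $\Gamma^{+}$: scaling by a nonnegative rational and addition of a rational constant both leave feasibility unchanged, and $\phi_=\in\Gamma^{+}$ because $\phi_=\in\Gamma$. The one nontrivial point, which I expect to be the main obstacle, is closure under expressibility: the feasibility relation of any weighted relation obtained as $\phi(\tup y)=\min_{\tup x}\sum_{i}\psi_i$ is exactly the primitive-positive formula $\exists\tup x\,\bigwedge_i\feas(\psi_i)$, and $\feas(\Gamma)$ is closed under such pp-definitions because it is itself a relational clone.

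Once the sandwich is established, I would split on the two cases. If $\Gamma^{+}=\wrel_D$, then every relation on $D$ arises as the feasibility of some weighted relation in $\Gamma$, whence (using the remark after Definition~\ref{defn:wrelclone} that $\feas(\phi)=0\cdot\phi\in\Gamma$ for $\phi\in\Gamma$) we conclude $\rel_D\subseteq\Gamma$; intractability is then immediate since $\VCSP(\rel_D)$ includes NP-hard (unweighted) CSPs such as $3$-SAT on any domain of size at least $2$.

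In the remaining case $\Gamma^{+}=\Gamma$, I would first observe that the identically-zero $m$-ary weighted relation $\sum_{i=1}^{m}\phi_=(y_i,y_i)$ is expressible over $\Gamma$, so $D^m\in\feas(\Gamma)$ for every $m\geq 1$. Consequently $\Gamma=\Gamma^{+}$ contains every finite-valued weighted relation on $D$. Fixing distinct labels $0,1\in D$, the binary weighted relation $\phi':D^2\to\Q$ defined by $\phi'(0,1)=\phi'(1,0)=0$, $\phi'(0,0)=\phi'(1,1)=1$, and $\phi'(x,y)=2$ whenever $\{x,y\}\not\subseteq\{0,1\}$ then lies in $\Gamma$, and a short monotonicity argument (enlarging the set of variables labelled in $\{0,1\}$ is always non-increasing in cost, and strictly decreases it on boundary edges) shows that the optimum of $\sum_{(i,j)\in E}\phi'(x_i,x_j)$ on any graph $(V,E)$ equals $|E|-\mathrm{MaxCut}(V,E)$; hence $\VCSP(\{\phi'\})$ is NP-hard and $\Gamma$ is intractable. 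Non-crispness of $\Gamma$ is used only implicitly, to ensure that this second case is consistent with $\Gamma\neq\wrel_D$.
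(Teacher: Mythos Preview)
Your argument is correct and, at its core, coincides with the paper's: both arrive at the dichotomy that either $\feas(\Gamma)=\rel_D$ (whence $\rel_D\subseteq\Gamma$ and intractability follows from an NP-hard crisp language) or $\Gamma$ contains every finite-valued weighted relation (whence intractability follows from a Max-Cut encoding). The paper reaches this dichotomy more directly---it picks any finite-valued $\phi\notin\Gamma$, observes that adjoining $\phi$ cannot enlarge $\feas(\Gamma)$, and hence $\wrelclone(\Gamma\cup\{\phi\})\neq\wrel_D$ unless $\feas(\Gamma)=\rel_D$, contradicting maximality---whereas you encapsulate the same feasibility-preservation observation in the auxiliary clone $\Gamma^{+}$ and collapse a sandwich; the extra work of checking that $\Gamma^{+}$ is a weighted relational clone (equivalently, that $\feas(\Gamma)$ is a relational clone) is exactly what the paper uses implicitly in its one-line claim $\feas(\Gamma)=\feas(\wrelclone(\Gamma\cup\{\phi\}))$.
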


\begin{proof}
If $\Gamma$ contains all finite-valued weighted relations then it is
intractable. Otherwise, there is a finite-valued weighted relation
$\phi\not\in\Gamma$. 
Since $\phi$ is finite-valued we have $\feas(\Gamma)=\feas(\wrelclone(\Gamma\cup\{\phi\}))$. But then either
$\feas(\Gamma)=\mathbf{R}_D$, in which case $\Gamma$ is intractable, or
$\feas(\Gamma)=\feas(\wrelclone(\Gamma\cup\{\phi\}))\neq\mathbf{R}_D$, in which
case $\Gamma$ is not maximal.
\end{proof}

\section{Proof of Theorem~\ref{thm:class2}}\label{sec:proof}

In this section we will prove the following theorem, which is our main result.

\begin{theorem*}[Theorem~\ref{thm:class2} restated]
Any weighted clone $W$ that is a rigid core and contains a positive weighting
also contains a weighting whose support is either:
\begin{enumerate}
  \item a set of binary idempotent operations that are not projections; or
  \item a set of ternary operations that are either:
  \begin{enumerate}
   \item a set of majority operations; or 
   \item a set of minority operations; or 
  \item  a set of majority operations with total weight $2$ and a set
  of minority operations with total weight $1$; or 
  \end{enumerate}
  \item a set of $k$-ary semiprojections (for some $k \geq 3$).
\end{enumerate}
\end{theorem*}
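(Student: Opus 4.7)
My plan is to derive Theorem~\ref{thm:class2} as a refinement of Theorem~\ref{thm:class1}, handling the easy cases by inspection and applying Gordan's theorem to the ternary case. Starting from a positive weighting in $W$, I first invoke Theorem~\ref{thm:class1}. Its case~(1) requires a positive unary weighting with a non-projection in its support, but $W$ being a rigid core means $\proj^{(1)}_1$ is the only unary operation in the support clone, so no positive unary weighting can exist in $W$. Case~(2) of Theorem~\ref{thm:class1} is precisely case~(1) of Theorem~\ref{thm:class2}, and case~(4) is subsumed by case~(3) of Theorem~\ref{thm:class2}. The genuine work is therefore in sharpening case~(3) of Theorem~\ref{thm:class1}: a positive ternary weighting $\omega\in W$ supported on sharp ternary operations (majorities, minorities, Pixleys and semiprojections).

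Next, I apply Lemma~\ref{lem:perm} to replace $\omega$ by a cyclically symmetric positive ternary weighting $\mu\in W$ with $\mu(\proj^{(3)}_i)=-1$ for $i\in\{1,2,3\}$. Cyclic permutation preserves majorities and minorities on individual operations, whereas it cycles the three operations of each ``Pixley orbit'' $\{P_1,P_2,P_3\}$ and each ``semiprojection orbit'' $\{S_1,S_2,S_3\}$ amongst themselves; for instance $P_1^{\pi}=P_3$ and $P_1^{\pi^2}=P_2$ for the three-cycle $\pi\in\perm_3$. Hence $\mu$ assigns equal weight to the three members of each Pixley orbit and each semiprojection orbit. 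Let $K\subseteq W$ be the convex cone of all cyclically symmetric ternary weightings $\nu\in W$ with $\nu(\proj^{(3)}_i)=-1$ for each $i$; it is nonempty since $\mu\in K$. For $\nu\in K$ let $w_{\mathrm{Mj}}(\nu),\,w_{\mathrm{Mn}}(\nu),\,w_{\mathrm{P}}(\nu),\,w_{\mathrm{S}}(\nu)$ be the total weight $\nu$ places on the four types of sharp operation; these quantities sum to $3$.

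The heart of the proof is an application of Gordan's theorem inside the polyhedral cone $K$. The key computational input is the \emph{type-profile identity}: on a one-different tuple $(x,x,y)$ the three values $P_1(x,x,y)+P_2(x,x,y)+P_3(x,x,y)$ sum to $x+2y$, the analogous semiprojection sum gives $2x+y$, every majority evaluates to $x$, every minority to $y$, and the three projections sum to $2x+y$ (matching the semiprojection profile exactly). Encoding a finite generating set of $K$ together with these profiles, the question ``does $K$ contain $\nu$ with $w_{\mathrm{P}}(\nu)=0$ whose support is restricted to just majorities, just minorities, or just semiprojections?'' becomes a standard polyhedral feasibility LP. By Gordan's theorem either such a $\nu$ exists---yielding cases~(2a),~(2b) or~(3) of Theorem~\ref{thm:class2} directly---or a nonnegative nonzero dual certificate shows that every element of $K$ carries strictly positive Pixley weight. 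In the latter case the profile symmetry, namely that the Pixley profile $(1,2)$ is the swap of the projection profile $(2,1)$, forces the Pixley contribution to balance against a specific majority-to-minority ratio, producing a weighting with $w_{\mathrm{Mj}}=2$ and $w_{\mathrm{Mn}}=1$ and thus case~(2c).

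The main obstacle is turning the Gordan dual certificate into the exact $2{:}1$ ratio of case~(2c). The one-different identities above constrain only the behaviour of operations on two-valued input tuples; on all-distinct tuples $(x,y,z)$ the individual operations within a Pixley or semiprojection orbit may take arbitrary compatible values, so the LP variables $w_{\mathrm{Mj}},w_{\mathrm{Mn}},w_{\mathrm{P}},w_{\mathrm{S}}$ do not by themselves determine $\nu$. I would address this by combining the proper-superposition closure of $W$ with the cyclic symmetry of $\mu$ to reduce the all-distinct analysis back to the one-different identities, effectively collapsing the weighting onto its type-profile data and validating the Gordan reduction; this is also the step where the role of the support clone being a rigid core (idempotency of everything in $\supp(W)\cup\mathbf{J}_D$) is used to rule out additional unary-type degeneracies.
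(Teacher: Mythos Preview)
Your reduction to the ternary case and your use of Lemma~\ref{lem:perm} are correct, but the core of your argument has a genuine gap. You never make use of superpositions $\omega[f,g,h]$ with \emph{non-projection} operations $f,g,h$, and without these you cannot construct the required weightings. The only way to produce new weightings inside $W$ is by nonnegative linear combinations of (possibly improper, via Lemma~\ref{lemma:sumofsuperpos}) superpositions of $\omega$; restricting to superpositions by projections only gives you back the cyclic symmetrisations you already have. The paper's proof hinges on computing what $o[p_1,p_2,p_3]$ and $o[s_1,s_2,s_3]$ look like when $o$ ranges over sharp ternary operations and $(p_1,p_2,p_3)$, $(s_1,s_2,s_3)$ are related Pixley or semiprojection triples (for instance, a Pixley of type~$i$ superposed with a related Pixley triple becomes a semiprojection of type~$i$, a majority becomes a minority, etc.). These superpositions are what allow weight to be moved between types, and Gordan's theorem is applied to the system asking whether a nonnegative combination of $\omega[\proj^{(3)}_1,\proj^{(3)}_2,\proj^{(3)}_3]$ together with the $\omega[p_1,p_2,p_3]$ can zero out the Pixley coordinates; infeasibility of the dual is shown by choosing the triple maximising $y_{p_1}+y_{p_2}+y_{p_3}$.

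Your proposed LP on the four aggregate quantities $w_{\mathrm{Mj}},w_{\mathrm{Mn}},w_{\mathrm{P}},w_{\mathrm{S}}$ is too coarse: you do not know a priori which $4$-tuples summing to $3$ are realised by some $\nu\in K$, so there is no finite generating set to feed into Gordan. Moreover, your dichotomy is logically wrong: if the dual certificate showed that every $\nu\in K$ carried strictly positive Pixley weight, that would \emph{preclude} any weighting supported only on majorities and minorities, so it cannot possibly yield case~(2c). In the paper the $2{:}1$ ratio arises only \emph{after} Pixleys and semiprojections have been eliminated in two separate Gordan steps: once $w_{\mathrm{Mj}}+w_{\mathrm{Mn}}=3$, a direct computation with $\omega[\proj^{(3)}_1,\proj^{(3)}_1,f]$ for $f$ a minority (respectively $\omega[\proj^{(3)}_1,f,f]$ for $f$ a majority) shows that $w_{\mathrm{Mj}}>2$ lets you kill all minorities, $w_{\mathrm{Mn}}>1$ lets you kill all majorities, and the remaining boundary case is exactly~(2c). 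Finally, the ``type-profile identity'' $P_1(x,x,y)+P_2(x,x,y)+P_3(x,x,y)=x+2y$ is not well-typed (these are elements of $D$, not numbers) and in any case only restates Table~\ref{tab:sharp3}; it does not supply the superposition calculations you need.
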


We will use the following variant of Farkas' Lemma.
\begin{theorem}[Gordan]\label{thm:gordan}
Let $A\in\mathbb{Q}^{n\times m}$ be a matrix. 
Either $Ax=0$, where $x\in\mathbb{Q}^{m}$ with $x\geq 0$
and $x\neq 0$, or $\exists y\in\mathbb{Q}^n$ with $y^{\intercal}A>0$.
\end{theorem}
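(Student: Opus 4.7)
\medskip

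\noindent\textbf{Proof plan for Gordan's theorem.} The plan is to establish the two implications of this ``theorem of the alternative'' separately: first show that the two stated alternatives cannot both hold, then derive existence of at least one of them from Farkas' Lemma (or, alternatively, from the separating hyperplane theorem), taking care that rationality is preserved.

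For the mutual exclusivity, suppose for contradiction that both a nonzero $x\in\mathbb{Q}^m$ with $x\geq 0$ and $Ax=0$, and a $y\in\mathbb{Q}^n$ with $y^{\intercal}A>0$, exist. Computing the scalar $y^{\intercal}Ax$ in two ways yields an immediate contradiction: on the one hand $y^{\intercal}Ax = y^{\intercal}(Ax) = 0$; on the other hand, writing $y^{\intercal}Ax = \sum_{j=1}^{m}(y^{\intercal}A)_j\, x_j$, each summand is nonnegative (product of a positive number and a nonnegative number) and at least one is strictly positive (since $x\neq 0$), so the sum is strictly positive. This part is routine.

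For the existence, the plan is to reduce to the standard form of Farkas' Lemma, which I will treat as a known result (over $\mathbb{Q}$): for any $B\in\mathbb{Q}^{N\times m}$ and $b\in\mathbb{Q}^N$, either there is $x\in\mathbb{Q}^m$ with $x\geq 0$ and $Bx=b$, or there is $z\in\mathbb{Q}^N$ with $z^{\intercal}B\geq 0$ and $z^{\intercal}b<0$. I would apply this to the augmented system obtained by appending a row of ones to $A$ and the scalar $1$ to the right-hand side, namely
\begin{equation*}
B \;=\; \begin{pmatrix} A \\ \mathbf{1}^{\intercal} \end{pmatrix}, \qquad b \;=\; \begin{pmatrix} 0 \\ 1 \end{pmatrix}.
\end{equation*}
If the first alternative for $B,b$ holds, we obtain $x\in\mathbb{Q}^m$ with $x\geq 0$, $Ax=0$ and $\mathbf{1}^{\intercal}x=1$; the last equality forces $x\neq 0$, giving the first alternative of Gordan's theorem. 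If the second alternative for $B,b$ holds, then writing the dual variable as $z=(y,\lambda)$ with $y\in\mathbb{Q}^n$ and $\lambda\in\mathbb{Q}$, the condition $z^{\intercal}B\geq 0$ reads $y^{\intercal}A+\lambda\mathbf{1}^{\intercal}\geq 0$, while $z^{\intercal}b<0$ reads $\lambda<0$. Combining these gives $(y^{\intercal}A)_j\geq -\lambda>0$ for every $j$, i.e.\ $y^{\intercal}A>0$ componentwise, which is the second alternative of Gordan's theorem.

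The main obstacle is really just the appeal to Farkas' Lemma over $\mathbb{Q}$; everything else is elementary linear algebra. If one preferred a self-contained route, the natural alternative would be to apply the separating hyperplane theorem to the convex hull $K$ of the columns of $A$ (a compact polytope): if $0\in K$ we recover the first alternative, and if $0\notin K$ a separating functional yields $y$ with $y^{\intercal}(Ae_j)>0$ for every standard basis vector $e_j$, i.e.\ $y^{\intercal}A>0$. In that approach the only subtle point is to guarantee that the separating hyperplane can be chosen over $\mathbb{Q}$; this follows because the separation is witnessed by an optimal solution of a rational linear program, whose vertex solutions are rational.
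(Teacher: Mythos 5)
The paper does not prove this statement: Theorem~\ref{thm:gordan} is stated as a known variant of Farkas' Lemma and is used as a black box in the proof of Theorem~\ref{thm:class2}. So there is no paper proof to compare against, and your task here was really to supply a proof the authors omitted.

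Your argument is correct and is the standard derivation. The mutual-exclusivity computation $y^{\intercal}Ax$ is exactly right. The reduction to Farkas via the augmented matrix $B=\binom{A}{\mathbf{1}^{\intercal}}$ and $b=\binom{0}{1}$ is the textbook route: the normalization $\mathbf{1}^{\intercal}x=1$ forces $x\neq 0$ in the first alternative, and in the second alternative $\lambda<0$ together with $y^{\intercal}A\geq -\lambda\mathbf{1}^{\intercal}$ yields strict componentwise positivity. Rationality is automatic once you invoke Farkas' Lemma over $\mathbb{Q}$, as you note. One simplification worth mentioning for your alternative route via separating hyperplanes: to get a rational $y$ in the second alternative you don't actually need to pass through LP vertex solutions --- the set $\{y : y^{\intercal}A>0\}$ is open, so once it is nonempty over $\mathbb{R}$ it already contains a rational point. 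Either way, the proposal fills the gap correctly.
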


By Definition~\ref{defn:wclone}, only \emph{proper} superpositions are allowed
within a weighted clone. However, the following result
from~\cite{cccjz13:sicomp} shows that any weighted sum of arbitrary
superpositions of a pair of weightings $\omega_1$ and $\omega_2$ can be obtained
by taking a weighted sum of superpositions of $\omega_1$ and $\omega_2$ with
projection operations, and then taking a superposition of the result. Given that
superpositions with projections are always proper~\cite{cccjz13:sicomp}, this
result implies that any weighting which can be expressed as a weighted sum of
arbitrary (i.e., possibly improper) superpositions can also be expressed as a superposition of a weighted
sum of \emph{proper} superpositions.

\begin{lemma}[\protect{\hspace*{-0.1cm}\cite[Lemma~6.4]{cccjz13:sicomp}}]
\label{lemma:sumofsuperpos}
Let $C$ be a clone, and let $\omega_1$ and $\omega_2$ be weightings of $C$, of arity $k$ and $\ell$ respectively.
For any $g_1,\ldots,g_k \in C^{(m)}$ and any $g'_1,\ldots,g'_\ell \in C^{(m)}$,
\begin{equation*}
c_1 \, \omega_1[g_1,\ldots,g_k] + \ c_2 \, \omega_2[g'_1,\ldots,g'_{\ell}] =
\omega[g_1,\ldots,g_k,g'_1,\ldots,g'_{\ell}]\,,
\end{equation*}
where $\omega = c_1 \, \omega_1[e^{(k+\ell)}_1,\ldots,e^{(k+\ell)}_k] + \ c_2 \, \omega_2[e^{(k+\ell)}_{k+1},\ldots,e^{(k+\ell)}_{k+\ell}]$
\end{lemma}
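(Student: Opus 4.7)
My plan is to derive the refinement by setting up a cone of derivable weightings inside $W$ and then separating it using Gordan's theorem (Theorem~\ref{thm:gordan}). First I match the conclusion against Theorem~\ref{thm:class1}: the rigid-core hypothesis vacates case~(1) there, because a rigid core has only the unary projection in its support clone, so no positive unary weighting can exist in $W$. Case~(2) of Theorem~\ref{thm:class1} is our case~(1) verbatim, and case~(4) is contained in our case~(3) (we merely widen the arity range to $k \ge 3$). All the work is therefore to refine case~(3) of Theorem~\ref{thm:class1}: a positive ternary weighting $\omega \in W$ whose support lies in the sharp ternary classes of Table~\ref{tab:sharp3}.

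The next step is symmetrization. Apply Lemma~\ref{lem:perm} to replace $\omega$ by a cyclically invariant $\mu \in W$ with $\mu(\proj^{(3)}_i) = -1$ for $i = 1,2,3$ and with $\supp(\mu)$ closed under cyclic permutation of arguments. Cyclic permutation fixes every majority and every minority individually and permutes the three Pixley types $\{P_1,P_2,P_3\}$ and the three semiprojection types $\{S_1,S_2,S_3\}$ each as a single orbit; so $\supp(\mu)$ splits into four subfamilies (majorities, minorities, Pixleys, semiprojections) with cyclically-related operations carrying equal weight and total positive weights $M, N, P, S$ that sum to $3$.

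The central step is Gordan's theorem. The set of ternary weightings in $W$ derivable from $\mu$ via proper superpositions and nonnegative combinations is, thanks to Lemma~\ref{lemma:sumofsuperpos}, the nonnegative linear image of a finite rational matrix $A$. For each of the four target supports (all majorities; all minorities; the $2 : 1$ majority-to-minority mix; all semiprojections), encode ``a weighting of that shape lies in the image of $A$'' as a Gordan primal system $Ax = b$, $x \ge 0$. If all four primals fail, Gordan yields for each a dual certificate $y$; these vectors, indexed by input $3$-tuples from some $D^m$, assemble into a weighted relation $\phi \in \wrel_D^{(m)}$ whose entries would witness a strict violation of inequality~\eqref{eq:WPOL} for $\mu$ applied to $\phi$. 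But $\phi \in \imp(W)$ by the Galois correspondence (Theorem~\ref{thm:wgc}), so $\mu$ must improve $\phi$, a contradiction. Hence at least one of the four primals is feasible, which yields a weighting in $W$ of the desired shape.

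The main obstacle is controlling the Pixley class and extracting the precise $2 : 1$ ratio of case~(2c). A Pixley operation is ``majority-like'' on one of the three patterns $(x,x,y),(x,y,x),(y,x,x)$ and ``minority-like'' on the other two, so a Pixley contribution cannot be traded cleanly for a single pure class — the $2:1$ ratio of case~(2c) is precisely the combinatorial record of this asymmetry. I expect the identity to emerge by arranging the Gordan matrix so that the only admissible dual certificates are symmetric combinations of these three ``two-coordinates-equal'' patterns (together with the ``all three distinct'' pattern, which is handled by semiprojections since they reduce to a fixed projection on any non-distinct input and are therefore decoupled from the other three classes). Verifying that Pixleys genuinely decompose into a $2:1$ majority/minority combination along these identities, and checking all the sign conditions needed to keep each of the four Gordan primals well-posed, is where the delicate bookkeeping of the proof lies.
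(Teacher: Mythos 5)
Your proposal proves the wrong statement. The item you were asked to address is Lemma~\ref{lemma:sumofsuperpos} (the paper cites it as Lemma~6.4 of~\cite{cccjz13:sicomp}), a purely algebraic identity about weightings: a weighted sum $c_1\,\omega_1[g_1,\ldots,g_k] + c_2\,\omega_2[g'_1,\ldots,g'_\ell]$ equals a single superposition $\omega[g_1,\ldots,g_k,g'_1,\ldots,g'_\ell]$, where $\omega$ itself is a weighted sum of superpositions by projections. What you wrote is instead a sketch of Theorem~\ref{thm:class2}, the paper's main theorem, invoking Lemma~\ref{lem:perm}, Gordan's theorem, cyclic symmetrisation, and Pixley bookkeeping. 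None of that is relevant to Lemma~\ref{lemma:sumofsuperpos}; indeed, that lemma is an ingredient \emph{used inside} the proof of Theorem~\ref{thm:class2}, not a consequence of it.

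A correct argument for the actual statement is short and elementary. First, by Definition~\ref{defn:wp_trans}, superposition is linear in the weighting: $(c_1\nu_1 + c_2\nu_2)[h_1,\ldots,h_n] = c_1\,\nu_1[h_1,\ldots,h_n] + c_2\,\nu_2[h_1,\ldots,h_n]$, since each side evaluated at an operation $f'$ is a sum over the same fibre $\{f : f[h_1,\ldots,h_n]=f'\}$. Second, superpositions compose: for a $k$-ary weighting $\nu$, operations $p_1,\ldots,p_k\in C^{(n)}$ and $h_1,\ldots,h_n\in C^{(m)}$, one has $\nu[p_1,\ldots,p_k][h_1,\ldots,h_n] = \nu[p_1[h_1,\ldots,h_n],\ldots,p_k[h_1,\ldots,h_n]]$, a direct consequence of associativity of superposition of operations. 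Applying this with $p_i=e^{(k+\ell)}_i$ and $h_j = g_j$ for $j\le k$, $h_{k+j}=g'_j$ for $j\le\ell$, gives $\omega_1[e^{(k+\ell)}_1,\ldots,e^{(k+\ell)}_k][g_1,\ldots,g_k,g'_1,\ldots,g'_\ell]=\omega_1[g_1,\ldots,g_k]$, and the symmetric identity for $\omega_2$. Linearity then yields the claimed equality. You should replace your proposal with an argument along these lines; the Gordan-based machinery belongs to the proof of Theorem~\ref{thm:class2}, not here.
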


Before proving Theorem~\ref{thm:class2} we introduce the following useful notion.
For the reader's convenience, we repeat here Table~\ref{tab:sharp3} from
Section~\ref{sec:prelim}.
\newcounter{savetable}
\setcounter{savetable}{\value{table}}
  \begin{table}[ht]
  \begin{center}
  \begin{tabular}{c|*{8}{c}}
    \makebox[1.5cm]{Input} 
    & \makebox[0.75cm]{Mj} & \makebox[0.75cm]{S1} & \makebox[0.75cm]{S2}
    & \makebox[0.75cm]{S3} & \makebox[0.75cm]{P1} & \makebox[0.75cm]{P2}
    & \makebox[0.75cm]{P3} & \makebox[0.75cm]{Mn}\\
    \hline
    (x,x,y) & x & x & x & y & x & y & y & y\\
    (x,y,x) & x & x & y & x & y & x & y & y\\
    (y,x,x) & x & y & x & x & y & y & x & y\\
  \end{tabular}
  \end{center}
  \caption*{Table 1 (restated): Sharp ternary operations}
  \end{table}
\setcounter{table}{\value{savetable}}
We call (ternary) operations corresponding to columns 5, 6, and 7 in
Table~\ref{tab:sharp3} Pixley operations of type 1, 2, and 3 respectively, and
will denote by P1 (P2 and P3) the Pixley operations of type 1 (2 and 3,
respectively). We call (ternary) semiprojections corresponding to columns 2, 3, and 4 in
Table~\ref{tab:sharp3} semiprojections of type 1, 2, and 3 respectively, and
will denote by S1 (S2 and S3) the semiprojections of type 1 (2 and 3,
respectively). More generally, a $k$-ary semiprojection $f$ is
called of type $1\leq i\leq k$ if equating any two inputs of $f$ results in
$\proj^{(k)}_i$.

For any Pixley operation $f$ of type $i\in\{1,2,3\}$ we can obtain, by
(cyclically) permuting the arguments of $f$, Pixley operations of the other two
types. For instance, if $f\in\mbox{P1}$ then we have $g\in\mbox{P2}$ and
$h\in\mbox{P3}$, where
$g(x,y,z)=f[\proj^{(3)}_3,\proj^{(3)}_1,\proj^{(3)}_2]=f(z,x,y)$ and
$h(x,y,z)=f[\proj^{(3)}_2,\proj^{(3)}_3,\proj^{(3)}_1]=f(y,z,x)$.
Two Pixley operations $f$ and $g$ of different types are called \emph{related}
if there is a permutation $\pi\in\perm_3$ such that $f=g^\pi$. (Note that the
requirement of $f$ and $g$ being of different types excludes the identity
permutation $(1,2,3)$ and there are only other two permutations in $\perm_3$,
namely $(2,3,1)$ and $(3,1,2)$.)

Similarly, two semiprojections $f$ and $g$ of different types are called related
if there is a permutation $\pi\in\perm_3$ such that $f=g^\pi$.

The following table, which can be verified using the definitions above, will be
useful in the proof of Theorem~\ref{thm:class2}. It lists the types of ternary
sharp operations obtained by superposing a ternary sharp operation of an arbitrary
type (columns in Table~\ref{tab:types}) with any of the three cyclic permutations of the
three ternary projections (rows in Table~\ref{tab:types}).

\begin{table}[ht]
  \begin{center}
  \begin{tabular}{c|*{8}{c}}
    \makebox[1.5cm]{Permutation} 
    & \makebox[0.75cm]{Mj} & \makebox[0.75cm]{S1} & \makebox[0.75cm]{S2}
    & \makebox[0.75cm]{S3} & \makebox[0.75cm]{P1} & \makebox[0.75cm]{P2}
    & \makebox[0.75cm]{P3} & \makebox[0.75cm]{Mn}\\
    \hline
    $\proj^{(3)}_1,\proj^{(3)}_2,\proj^{(3)}_3$ & Mj & S1 & S2 & S3 & P1 & P2 & P3 & Mn\\
    $\proj^{(3)}_2,\proj^{(3)}_3,\proj^{(3)}_1$ & Mj & S2 & S3 & S1 & P3 & P1 & P2 & Mn\\
    $\proj^{(3)}_3,\proj^{(3)}_1,\proj^{(3)}_2$ & Mj & S3 & S1 & S2 & P2 & P3 & P1 & Mn\\
  \end{tabular}
  \end{center}
  \caption{Types of ternary sharp operations superposed with cyclic permutations of
  projections}\label{tab:types}
\end{table}

Note that taking a semiprojection $f$ of type $i$ and a Pixley operation $g$ of
type $i$, $f^\pi$ and $g^\pi$ can be of different types; e.g., if $f$ is a
semiprojection of type $1$ and $g$ is a Pixley operation of type $1$ and
$\pi=(2,3,1)$ then $f^\pi$ is a semiprojection of type $2$ and $g^\pi$ is a
Pixley operation of type $3$.

\begin{proof}[Proof of Theorem~\ref{thm:class2}]
  It suffices to consider the ternary case as the rest of the theorem follows from (the proof
  of) Theorem~\ref{thm:class1} and the fact that $W$ is a rigid core, which
  eliminates the first case of Theorem~\ref{thm:class1}.
    
  Let $W$ be a weighted clone containing a ternary
  positive weighting $\omega$ such that every $f\in\supp(\omega)$ is sharp. 
  (If some $f\in\supp(\omega)$ were not sharp then we could show, as in the proof 
  of Theorem~\ref{thm:class1}, that the case~(1) holds.)
  We denote by $C$ the support clone of $W$.
  We assume that none of the cases (2a), (2b), (2c), (3) (with $k=3$) of the
  theorem applies as we would be done in any of these cases. 

  By Lemma~\ref{lem:perm}, we can assume that $\omega$ assigns
  weight $-1$ to each of the three ternary projections and thus
  \begin{equation}\label{eq:sum3} \sum_{f\in\supp(\omega)}\omega(f)\ =\ 3\,. \end{equation}
 
  Let $P_i\subseteq C$ be the Pixley operations of type
  $i\in\{1,2,3\}$ from $C$.   
  Since $C$ is a clone we have 
  \begin{equation}\label{eq:Pcard} |P_1|\ =\ |P_2|\ =\ |P_3|\,. \end{equation}

  By Lemma~\ref{lem:perm}, we have 
  for any three related Pixley operations $p_1\in P_1$, $p_2\in P_2$, and $p_3\in P_3$,
  \begin{equation}\label{eq:Prelated} \omega(p_1)\ =\ \omega(p_2)\ =\ \omega(p_3)\,, \end{equation}
  and
  \begin{equation}\label{eq:Pweights} 
  \sum_{p\in P_1} \omega(p)\ =\ \sum_{p\in P_2} \omega(p)\ =\ \sum_{p\in P_3}
  \omega(p)\,. \end{equation} 
 
  We set $P=P_1\cup P_2\cup P_3$ to be the set of all Pixley operations from
  $C$ and $w(P)=\sum_{p\in P}\omega(p)$. 

  By Lemma~\ref{lem:perm}, 
  the same holds for the three types of ternary semiprojections. 
  In particular, we denote by $S_i\subseteq C$ the operations from $C$ that are semiprojections of type $i\in\{1,2,3\}$. 
  Since $C$ is a clone, we have
  \begin{equation}\label{eq:Scard} 
  |S_1|\ =\ |S_2|\ =\ |S_3|\,. 
  \end{equation}
  For any three related
  semiprojections $s_1\in S_1$, $s_2\in S_2$, and $s_3\in S_3$, 
  \begin{equation}\label{eq:Srelated}
  \omega(s_1)\ =\ \omega(s_2)\ =\ \omega(s_3)\,,
  \end{equation}
  and
  \begin{equation}\label{eq:Sweights} 
  \sum_{s\in S_1} \omega(s)\ =\ \sum_{s\in S_2} \omega(s)\ =\ \sum_{s\in S_3} \omega(s)\,. 
  \end{equation} 

  We set $S=S_1\cup S_2\cup S_3$ to be the set of all semiprojections from $C$
  and $w(S)=\sum_{s\in S}\omega(s)$.

  To simplify the presentation, we use the same notation for weightings, 
  index sets, etc. in the following three steps since the steps are similar (but
  independent). Thus, for example, when one reads $J$ in Step II it refers
  to $J$ defined in Step II and not in Step I.

  \textbf{Step I}: Eliminating Pixley operations.

  We now show how to eliminate Pixley operations if needed, that is,
  assume $w(P)>0$ and thus some operations from $P$ are assigned positive
  weight.

  First assume that $\omega$ assigns positive weight to only Pixley operations,
  that is, $\supp(\omega)\subseteq P$. Hence $w(P)=3$. 
  Take arbitrary $p_1,p'_1\in P_1$, $p_2,p'_2\in P_2$, and $p_3,p'_3\in P_3$.
  The following claims can be verified from the definitions:
  $p_1[\proj^{(3)}_1,\proj^{(3)}_2,p'_1]$ is a majority operation,
  $p_2[\proj^{(3)}_1,\proj^{(3)}_2,p'_1]=\proj^{(3)}_1$, and
  $p_3[\proj^{(3)}_1,\proj^{(3)}_2,p'_1]=\proj^{(3)}_2$.  Consequently,
  $\omega[\proj^{(3)}_1,\proj^{(3)}_2,p'_1]$ assigns weight $-1$ to $p'_1$, $+1$
  to majority operations, and $0$ otherwise. Similarly,
  $\omega[\proj^{(3)}_1,p'_2,\proj^{(3)}_3]$ assigns weight $-1$ to $p'_2$, $+1$
  to majority operations, and $0$ otherwise. Finally, 
  $\omega[p'_3,\proj^{(3)}_2,\proj^{(3)}_3]$ assigns weight $-1$ to $p'_3$, $+1$
  to majority operations, and $0$ otherwise. Overall, the weighting
  \begin{equation}\label{eq:pixleyonly}
  \mu\ =\ \omega+\sum_{p\in P_1}\omega(p)\omega[\proj^{(3)}_1,\proj^{(3)}_2,p]
  +\sum_{p\in P_2}\omega(p)\omega[\proj^{(3)}_1,p,\proj^{(3)}_3]
  +\sum_{p\in P_3}\omega(p)\omega[p,\proj^{(3)}_2,\proj^{(3)}_3]
  \end{equation}
  assigns weight $-1$ to each of the three ternary projections and weight $3$ to
  majority operations. By Lemma~\ref{lemma:sumofsuperpos}, the intermediate
  superpositions in~\eqref{eq:pixleyonly} can be improper as long as the
  resulting weighting $\mu$ is indeed a weighting. Thus $\mu\in W$ and case~(2a)
  of the theorem holds.

  Let assume that $\supp(\omega)\not\subseteq P$, that is, $\omega$ assigns positive weight not only to Pixley
  operations.
  Let $J=\{(p_1,p_2,p_3)\in P_1\times P_2\times P_3\:|\:\mbox{$p_1, p_2, p_3$ are related}\}$
  and $\bar{J}=\{(\proj^{(3)}_1,\proj^{(3)}_2,\proj^{(3)}_3)\}\cup J$.

  We consider the following
  linear system: for all Pixley operations $p\in P$,
  
  \begin{equation} \label{eq:pixleyzero} \sum_{(f,g,h) \in \bar{J}}x_{f,g,h}\omega[f,g,h](p)\ =\ 0\,. \end{equation}
  
  By Gordan's Theorem,~\eqref{eq:pixleyzero} has a nonzero nonnegative solution
  if and only if the following system of strict
  inequalities is unsatisfiable: for all $(f,g,h) \in \bar{J}$,
  
  \begin{equation} \label{eq:pixleygordan} \sum_{p \in P}y_p\omega[f,g,h](p)\ >\
  0\,. \end{equation}

  Consider the case $\tuple{f,g,h}=\tuple{\proj^{(3)}_1,\proj^{(3)}_2,\proj^{(3)}_3}$.
  Then $\omega[f,g,h]=\omega$ and thus $\omega[f,g,h](p)=\omega(p)>0$ for all
  $p\in \supp(\omega)\cap P$,
  by the definition of $P$. Moreover, by~\eqref{eq:Pweights} 
  \begin{equation}\label{eq:split}
  \sum_{p\in P}y_p\omega(p) \ =\ \sum_{p_1\in
  P_1}(y_{p_1}+y_{p_2}+y_{p_3})\omega(p_1)\,,
  \end{equation} 
  where we denoted (with a slight abuse of notation) by $p_2$ and $p_3$ the related operations of $p_1$.
 
  For $\tuple{f,g,h} = \tuple{\proj^{(3)}_1,\proj^{(3)}_2,\proj^{(3)}_3}$ 
  the LHS of~\eqref{eq:pixleygordan} is equal to~\eqref{eq:split}. Thus,
  for~\eqref{eq:pixleygordan} to hold in this case
  we must have at least one
  triple of related operations $\tuple{p_1,p_2,p_3} \in J$ with $y_{p_1} + y_{p_2} + y_{p_3} > 0$.

  Suppose $\tuple{p_1,p_2,p_3}\in J$ is chosen
  to maximise
  $y_{p_1} + y_{p_2} + y_{p_3}$. 
  The left-hand side of \eqref{eq:pixleygordan} when
  $\tuple{f,g,h} = \tuple{p_1,p_2,p_3}$ is equal to
  \begin{equation}\label{eq:abc}
    - y_{p_1} - y_{p_2} - y_{p_3} + \sum_{s \in S}y_{s[p_1,p_2,p_3]}\omega(s)\,,
  \end{equation}
  since $o[p_1,p_2,p_3]$ is equal to a Pixley operation only when $o$ is one
  of the three projections, which give the first three terms in~\eqref{eq:abc},
  or a semiprojection, which gives the last term of~\eqref{eq:abc}, the sum over $S$.

  We have
  \begin{equation}\label{eq:S3}
  \begin{split}
    \sum_{s \in S}y_{s[p_1,p_2,p_3]}\omega(s)\ =\ &
    \sum_{s_1\in S_1}y_{s_1[p_1,p_2,p_3]}\omega(s_1) +
    \sum_{s_2\in S_2}y_{s_2[p_1,p_2,p_3]}\omega(s_2) + \sum_{s_3\in S_3}y_{s_3[p_1,p_2,p_3]}\omega(s_3) \\
    =\ & \sum_{\substack{\tuple{s_1,s_2,s_3}\in S_1\times S_2\times S_3\\s_1,s_2,s_3\mbox{ \scriptsize related}}}(y_{s_1[p_1,p_2,p_3]}+y_{s_2[p_1,p_2,p_3]}+y_{s_3[p_1,p_2,p_3]})\omega(s_1) \\
    \leq\ & \frac{\omega(S)}{3}(y_{p_1}+y_{p_2}+y_{p_3}),\\
  \end{split}
  \end{equation}
  where the first equality follows from the definition of $S$; the second
  equality follows from fact that $s[p_1,p_2,p_3]$ is a Pixley operation of type $i$
  given $s$ is a semiprojection of type $i$, where $i\in\{1,2,3\}$, and hence 
  $\tuple{s_1[p_1,p_2,p_3],s_2[p_1,p_2,p_3],s_3[p_1,p_2,p_3]}$ is a triple of
  related Pixley operations given that $\tuple{s_1,s_2,s_3}$ is a triple of
  related semiprojections;  and the
  last inequality follows from the definition of $w(S)$ and the choice of
  $(p_1,p_2,p_3)$.

  Combining~\eqref{eq:abc} and~\eqref{eq:S3}, we have
  \begin{equation}
  - y_{p_1} - y_{p_2} - y_{p_3} + \sum_{s \in S}y_{s[p_1,p_2,p_3]}\omega(s)
    \ \leq\ 
    - y_{p_1} - y_{p_2} - y_{p_3} + \frac{w(S)}{3}(y_{p_1}+y_{p_2}+y_{p_3})
    \ <\ 0\,,
  \end{equation}

  where the last strict inequality follows from $w(S)<3$ since $w(P)>0$
  and~\eqref{eq:sum3}.

  Hence, \eqref{eq:pixleygordan} is unsatisfiable and, by Gordan's Theorem,
  \eqref{eq:pixleyzero} must have a nonzero nonnegative solution $x^*$. 
  We finish Step~I by using $x^*$ to prove the existence of a
  weighting in $W$ that assigns zero weight to all Pixley operations.
  
  Let
  \begin{equation} \label{eq:result} \mu'=\sum_{(f,g,h) \in \bar{J}}x^*_{f,g,h}\omega[f,g,h]\,, 
  \end{equation}
  be a weighted sum of superpositions of $\omega$. 
  
  By the choice of $x^*$, $\mu'$ assigns zero weight to all Pixley operations. 
  From the definition of $\bar{J}$,
  \begin{equation}\label{eq:mu}
   \mu'\ =\
   x^*_{\proj^{(3)}_1,\proj^{(3)}_2,\proj^{(3)}_3}\omega[\proj^{(3)}_1,\proj^{(3)}_2,\proj^{(3)}_3]+\sum_{(f,g,h)\in
   J}x^*_{f,g,h}\omega[f,g,h]\,.
  \end{equation}

  We know that $x^*$ is nonzero and nonnegative. Note that
  $x^*_{\proj^{(3)}_1,\proj^{(3)}_2,\proj^{(3)}_3}>0$ and $x^*_{f,g,h}=0$ for
  all $(f,g,h)\in J$ would contradict that $\mu'$ assigns zero weight to all
  Pixley operations. 

  Let
  $(f,g,h)\in J$ and $i\in\{1,2,3\}$. Then $o[f,g,h]$ is a Pixley operation of
  type $i$ if $o$ is a semiprojection of type $i$, $o[f,g,h]$ is a
  semiprojection of type $i$ if $o$ is a Pixley operation of type $i$,
  $o[f,g,h]$ is a majority operation if $o$ is a minority operation, and finally
  $o[f,g,h]$ is a minority operation if $o$ is a majority operation. 
  It follows that $x^*_{\proj^{(3)}_1,\proj^{(3)}_2,\proj^{(3)}_3}=0$ would
  contradict that $\mu'$ assigns zero weight all Pixley operations since Pixley
  operations of type $i$ would be assigned negative weight since $w(S)-3<0$ as
  $w(P)>0$. Thus $x^*_{\proj^{(3)}_1,\proj^{(3)}_2,\proj^{(3)}_3}>0$ and
  $x^*_{f,g,h}>0$ for at least one $(f,g,h)\in J$. Consequently, $\mu'$ is a
  nonzero weighting that assigns zero weight to all Pixley operations. By
  Lemma~\ref{lemma:sumofsuperpos}, $\mu'\in W$.

  \textbf{Step II}: Eliminating semiprojections.

  We now show how to eliminate semiprojections if needed. If $\mu'$ obtained in
  Step I assigns positive weight to only semiprojections then we are in case
  (3) (with $k$=3) of the theorem.  Thus let assume that $\mu'$ assigns positive
  weight to semiprojections and at least one majority or minority operation. 
  
  By Lemma~\ref{lem:perm}, we can
  assume the existence of a ternary weighting $\omega\in W$ which assigns weight -1 to all three projections (and thus
  assigns total positive weight 3). We will use the same notation for $S$,
  $S_1$, $S_2$, and $S_3$ as before although note that the weighting $\omega$ is
  now different from the one we had before and throughout Step I.
  We have $w(S)<3$
  and~\eqref{eq:Srelated},~\eqref{eq:Scard},~\eqref{eq:Sweights} still hold for
  $\omega$. 

  We again use Gordan's Theorem to show that there exists a nonzero ternary
  weighting in $W$ that assigns positive weight to majority and minority
  operations only. 

  Let $J=\{(s_1,s_2,s_3)\in S_1\times S_2\times S_3\:|\:\mbox{$s_1, s_2, s_3$ are related}\}$
  and $\bar{J}=\{(\proj^{(3)}_1,\proj^{(3)}_2,\proj^{(3)}_3)\}\cup J$.

  We consider the following linear system: for all semiprojections $s\in S$,
  
  \begin{equation} \label{eq:semiprojzero} \sum_{\tuple{f,g,h}\in
  \bar{J}}x_{f,g,h}\omega[f,g,h](s)\ =\ 0\,. \end{equation} 

  By Gordan's Theorem,~\eqref{eq:semiprojzero} has a nonzero nonnegative solution
  if and only if the following system of strict
  inequalities is unsatisfiable: for all $(f,g,h) \in \bar{J}$,
  
  \begin{equation} \label{eq:semiprojgordan} 
  \sum_{s \in S}y_s\omega[f,g,h](s)\ >\ 0\,. 
  \end{equation} 

  As in Step I, we can argue that~\eqref{eq:semiprojgordan} is unsatisfiable.
  Consider the case $\tuple{f,g,h}=\tuple{\proj^{(3)}_1,\proj^{(3)}_2,\proj^{(3)}_3}$.
  Then $\omega[f,g,h]=\omega$ and thus $\omega[f,g,h](s)=\omega(s)>0$ for all
  $s\in\supp(\omega)\cap S$,
  by the definition of $S$. Moreover, by~\eqref{eq:Sweights} 
  \begin{equation}\label{eq:Ssplit}
  \sum_{s\in S}\omega(s)y_s \ =\ \sum_{s_1\in
  S_1}(y_{s_1}+y_{s_2}+y_{s_3})\omega(s_1)\,,
  \end{equation} 
  where we denoted by $s_2$ and $s_3$ the related operations of $s_1$.

  Therefore, for \eqref{eq:semiprojgordan} to hold when $\tuple{f,g,h} =
  \tuple{\proj^{(3)}_1,\proj^{(3)}_2,\proj^{(3)}_3}$ we must have at least one
  triple of related semiprojections $\tuple{s_1,s_2,s_3} \in J$
  with $y_{s_1} + y_{s_2} + y_{s_3} > 0$. 
  
  Suppose $\tuple{s_1,s_2,s_3}\in J$ is chosen
  to maximise
  $y_{s_1} + y_{s_2} + y_{s_3}$. 
  The left-hand side of \eqref{eq:semiprojgordan} when
  $\tuple{f,g,h} = \tuple{s_1,s_2,s_3}$ is equal to
  \begin{equation}\label{eq:Sabc}
    - y_{s_1} - y_{s_2} - y_{s_3} + \sum_{s \in S}y_{s[s_1,s_2,s_3]}\omega(s)\,,
  \end{equation}
  since $o[s_1,s_2,s_3]$ is equal to a semiprojection only when $o$ is one
  of the three projections, which give the first three terms in~\eqref{eq:Sabc},
  or a semiprojection, which gives the last term of~\eqref{eq:Sabc}, the sum over $S$.

  We have
  \begin{equation}\label{eq:SS3}
  \begin{split}
    \sum_{s \in S}y_{s[s_1,s_2,s_3]}\omega(s)\ =\ &
    \sum_{s_1\in S_1}y_{s_1[s_1,s_2,s_3]}\omega(s_1) +
    \sum_{s_2\in S_2}y_{s_2[s_1,s_2,s_3]}\omega(s_2) + \sum_{s_3\in S_3}y_{s_3[s_1,s_2,s_3]}\omega(s_3) \\
    =\ & \sum_{\substack{\tuple{s_1,s_2,s_3}\in S_1\times S_2\times
    S_3\\s_1,s_2,s_3\mbox{\scriptsize related}}}(y_{s_1[s_1,s_2,s_3]}+y_{s_2[s_1,s_2,s_3]}+y_{s_3[s_1,s_2,s_3]})\omega(s_1) \\
    \leq\ & \frac{\omega(S)}{3}(y_{s_1}+y_{s_2}+y_{s_3}),\\
  \end{split}
  \end{equation}
  where the first equality follows from the definition of $S$; the second
  equality follows from fact that $s[s_1,s_2,s_3]$ is a semiprojection of type $i$
  given $s$ is a semiprojection of type $i$, where $i\in\{1,2,3\}$, and hence 
  $\tuple{s_1[s_1,s_2,s_3],s_2[s_1,s_2,s_3],s_3[s_1,s_2,s_3]}$ is a triple of
  related semiprojections given that $\tuple{s_1,s_2,s_3}$ is a triple of
  related semiprojections;  and the
  last inequality follows from the definition of $w(S)$ and the choice of
  $(s_1,s_2,s_3)$.

  Combining~\eqref{eq:Sabc} and~\eqref{eq:SS3}, we have
  \begin{equation}
    - y_{s_1} - y_{s_2} - y_{s_3} + \sum_{s \in S}y_{s[s_1,s_2,s_3]}\omega(s)
    \ \leq\ 
    - y_{s_1} - y_{s_2} - y_{s_3} + \frac{w(S)}{3}(y_{s_1}+y_{s_2}+y_{s_3})
    \ <\ 0\,,
  \end{equation}
 where the last strict inequality follows from $w(S)<3$.

  Hence, \eqref{eq:semiprojgordan} is unsatisfiable and, by Gordan's Theorem,
  \eqref{eq:semiprojzero} must have a nonzero nonnegative solution $x^*$. 
  We finish Step~II by using $x^*$ to prove the existence of a
  weighting in $W$ that assigns zero weight to all semiprojections.
  
  Let
  \begin{equation} \label{eq:Sresult} \mu=\sum_{(f,g,h) \in J}\omega[f,g,h]x^*_{f,g,h}\,, 
  \end{equation}
  be a weighted sum of superpositions of $\omega$. By the choice of $x^*$, $\mu$ 
  assigns zero weight to all semiprojections. 

  From the definition of $\bar{J}$,
  \begin{equation}\label{eq:Smu}
   \mu\ =\
   x^*_{\proj^{(3)}_1,\proj^{(3)}_2,\proj^{(3)}_3}\omega[\proj^{(3)}_1,\proj^{(3)}_2,\proj^{(3)}_3]+\sum_{(f,g,h) \in J}x^*_{f,g,h}\omega[f,g,h]\,.
  \end{equation}

  We know that $x^*$ is nonzero and nonnegative. Note that
  $x^*_{\proj^{(3)}_1,\proj^{(3)}_2,\proj^{(3)}_3}>0$ and $x^*_{f,g,h}=0$ for
  all $(f,g,h)\in J$ would contradict that $\mu'$ assigns zero weight to all
  semiprojections. 

  Let $(s_1,s_2,s_3)\in J$ and $i\in\{1,2,3\}$. Then $o[s_1,s_2,s_3]$ is a
  semiprojection of type $i$ if $o$ is a semiprojection of type $i$,
  $o[s_1,s_2,s_3]$ is a majority operation if $o$ is a majority operation, and
  finally $o[s_1,s_2,s_3]$ is a minority operation if $o$ is a minority
  operation. (Note that we do not need to consider the case when $f$ is a Pixley
  operation as $\omega$ assigns zero weight to all  Pixley operations.) 
  It follows that $x^*_{\proj^{(3)}_1,\proj^{(3)}_2,\proj^{(3)}_3}=0$ would
  contradict that $\mu$ assigns zero weight all semiprojections since
  semiprojections of type $i$ would be assigned negative weight since $w(S)-3<0$
  as $\omega$ assigns positive weight to some majority or minority (or possibly
  both) operations. 
  Thus $x^*_{\proj^{(3)}_1,\proj^{(3)}_2,\proj^{(3)}_3}>0$ and $x^*_{f,g,h}$ for
  at least one $(f,g,h)\in J$. Consequently, $\mu$ is a nonzero weighting that
  assigns zero weight to all semiprojections. By
  Lemma~\ref{lemma:sumofsuperpos}, $\mu\in W$.

  \textbf{Step III}: Majority and minority operations.

  In Steps I and II, we have shown that any weighted clone $W$ with a positive ternary weighting
  contains a ternary weighting that assigns nonzero weight to semiprojections
  alone (case~(3) of the theorem with $k=3$), or a mix of majority and minority operations. Finally, we will show
  that if $W$ contains a weighting $\omega$
  that assigns weight to majority and minority operations alone then $W$ also
  contains a weighting of one of the three types described in
  cases~(2a),~(2b), and~(2c) of the theorem.
  
  Let $M_1$ and $M_2$ denote the sets of majority and minority operations in the
  support of $\omega$. Suppose that $\omega$ assigns total weight $2+a$ to $M_1$
  and total weight $1-a$ to $M_2$ for some $a > 0$. 
  For each $f \in M_2$, we define $\mu_f =
  \omega[\proj^{(3)}_1,\proj^{(3)}_1,f]$, so $\mu_f$ assigns weight $a$ to
  $\proj^{(3)}_1$ and weight $-a$ to $f$. Note that $\mu_f$ is not a proper
  weighting, since $a > 0$. We obtain a weighting $\mu \in W$
  which assigns positive weight to majority operations only as follows:

  \begin{equation}
    \mu = \omega + \sum_{f \in M_2}\frac{\omega(f)}{a}\mu_f\,.
  \end{equation}

  Similarly, suppose that $\omega$ assigns total weight $2-a$ to $M_1$ and total
  weight $1+a$ to $M_2$ for some $a>0$.
  For each $f\in M_1$, we define $\mu_f=\omega[\proj^{(3)}_1,f,f]$, so $\mu_f$
  assigns weight $a$ to $\proj^{(3)}_1$ and weight $-a$ to $f$. We obtain a
  weighting $\mu \in W$ which assigns positive weight to minority operations
  only as follows:

  \begin{equation}
    \mu = \omega + \sum_{f \in M_1}\frac{\omega(f)}{a}\mu_f\,.
  \end{equation}
  In both cases $\mu\in W$ by Lemma~\ref{lemma:sumofsuperpos}.
\end{proof}

\section{Conclusions}

We have presented new results on the structure of weighted clones that delimit
the possibilities for tractable valued constraint languages. In order to
establish our results, we have presented a novel technique for ruling out
certain types of operations from the support of a given weighting. The method
considers certain extreme cases of the dual of the linear program that
demonstrates the existence of a weighted sum of superpositions that assigns zero
weight to the forbidden operations. We believe that our results and techniques
will prove useful in further studies of the structure of weighted clones.
However, understanding the structure of weighted clones appears a difficult
problem in
general. For instance, whilst the computational complexity of finite-valued
constraint languages is well understood~\cite{tz13:stoc}, the structure of the
corresponding weighted clones is not, as discussed in Section~\ref{sec:wclones}.

In recent work on the tractability of valued constraint languages, it has been
shown that a necessary condition for tractability is the existence of a
\emph{cyclic} weighted polymorphism~\cite{Kozik15:icalp}.\footnote{A $k$-ary
operation if \emph{cyclic} if $f(x_1,x_2,\ldots,x_k)=f(x_2,\ldots,x_k,x_1)$ for
every $x_1,\ldots,x_k$. A weighting $\omega$ is cyclic if every operation
$f\in\supp(\omega)$ is cyclic.} Moreover, it has been also shown that, under the
assumption of the dichotomy conjecture of Feder and Vardi for the decision
problem, this condition is
also sufficient~\cite{Kolmogorov15:general-valued}.

\section*{Acknowledgements}
The authors are grateful to P\'aid\'i Creed and Peter Fulla for valuable discussions.


\newcommand{\noopsort}[1]{}

\end{document}